\documentclass[11pt]{article}
\usepackage[a4paper,margin=27mm,headheight=14pt]{geometry}
\usepackage[T1]{fontenc}
\usepackage{lmodern,amsmath,amssymb,amsthm,mathtools}
\usepackage{booktabs,array,tabularx,microtype,enumitem,needspace}
\usepackage{float,placeins}

\usepackage[hidelinks,pdfusetitle]{hyperref}
\usepackage{xurl}
\usepackage{fancyhdr}
\setlist{nosep,leftmargin=*}
\numberwithin{equation}{section}
\newtheorem{theorem}{Theorem}[section]
\newtheorem{proposition}[theorem]{Proposition}
\newtheorem{formalization}[theorem]{Formalization result}
\theoremstyle{definition}

\newcommand{\F}{\mathbb F}
\newcommand{\Z}{\mathbb Z}

\newcommand{\GL}{\mathrm{GL}}
\newcommand{\SL}{\mathrm{SL}}
\newcommand{\PGL}{\mathrm{PGL}}
\newcommand{\PSL}{\mathrm{PSL}}
\newcommand{\PSU}{\mathrm{PSU}}
\newcommand{\PSp}{\mathrm{PSp}}
\newcommand{\POm}{\mathrm{P}\Omega}
\newcommand{\Alt}{\operatorname{Alt}}
\newcommand{\Sym}{\operatorname{Sym}}
\newcommand{\Co}{\mathrm{Co}}
\newcommand{\Fi}{\mathrm{Fi}}
\newcommand{\Suz}{\mathrm{Suz}}
\newcommand{\Aut}{\operatorname{Aut}}
\newcommand{\atlas}{\textsc{Atlas}}
\newcommand{\code}[1]{\texttt{\small #1}}
\newcommand{\leanname}[1]{{\small\nolinkurl{#1}}}
\title{A constructive ATLAS\\of finite simple groups in Lean}
\author{Gerald H\"ohn\\[3pt]
  \small Department of Mathematics, Kansas State University\\
  \small\href{mailto:gerald@monstrous-moonshine.de}{\nolinkurl{gerald@monstrous-moonshine.de}}}
\hypersetup{pdfauthor={Gerald H\"ohn},
  pdfsubject={Constructions, orders, simplicity, and structural comparisons of finite simple groups},
  pdfkeywords={finite simple groups, sporadic groups, classical groups, octonions, Lean, formal verification}}
\date{25 September 2026}
\begin{document}
\maketitle

\begin{abstract}
We present a constructive atlas of finite simple groups with proofs of
their orders, simplicity, and structural properties. The eight completed
families are cyclic groups of prime order, alternating groups, the classical
series $A_r(q)$, $B_r(q)$, $C_r(q)$, $D_r(q)$, and the exceptional series $G_2(q)$
and the small Ree groups $ {}^2G_2(3^{2m+1})$, $m\geq1$.
The fifteen sporadic entries are $M_{11}$, $M_{12}$, $M_{22}$, $M_{23}$, $M_{24}$,
$\Co_1$, $\Co_2$, $\Co_3$, $\mathrm{McL}$, $\mathrm{HS}$, $\Suz$, $J_2$,
and $\Fi_{22}$, $\Fi_{23}$, $\Fi_{24}'$. The simple parameter ranges and
exceptional cases are stated explicitly. The models arise from codes,
lattices, forms, algebras, and finite geometries, including the split
octonions and the Conway--Parker algebra. They retain natural actions,
stabilizers, central quotients, and comparison maps for subsequent group
theory. Several classical comparison isomorphisms relate the models, while
involution-class counts distinguish the equal-order orthogonal and
symplectic families in odd characteristic and rank at least three.
Drawing on classical sources and companion mathematical work, the
project develops the construction side of finite simple group theory,
not the exhaustiveness proof of the classification. The completed models
and stated proofs are formalized in Lean for verification and reuse.
\end{abstract}

\smallskip
\noindent\textit{Source repository:}
\mbox{\url{https://github.com/Moonshine-in-Kansas/atlas}}

\section{Introduction}

The construction side of finite simple group theory asks for concrete
models together with proofs of their orders, simplicity, and
characteristic structure. Codes, lattices, algebras, forms, and finite
geometries provide such models and retain natural actions, stabilizers,
central quotients, and maps between related objects. The \atlas\ project
organizes these constructions into a structural library, with proofs
formalized in Lean~\cite{Lean4} using Mathlib~\cite{Mathlib}.

The catalogue reported here records the construction status of
18 September 2026: eight family packages and fifteen sporadic
constructions, summarized in Formalization
results~\ref{thm:families} and~\ref{thm:sporadics}. After the cyclic
and alternating groups, the classical series are presented in the order
$A,B,C,D$. Their explicit comparison isomorphisms relate models, while
involution-class counts separate equal-order orthogonal and symplectic
groups in odd characteristic and rank at least three. The full
automorphism group of the split octonions gives $G_2(q)$ uniformly in
all characteristics; singular-point geometry determines its order
and proves simplicity for $q>2$. Twisted seven-dimensional matrices
give the small Ree groups, whose point geometry determines their full
stabilizers and whose preserved tensors give embeddings into $G_2(q)$.
The sporadic constructions then follow the order of the abstract:
the Mathieu and Conway groups, $\mathrm{McL},\mathrm{HS},\Suz,J_2$,
and the three Fischer groups. Shared code, lattice, and algebra data
are introduced before the entries that use them.

Classification has two complementary components: constructing the
groups on the list and proving that every finite simple group occurs
there. \atlas\ develops the first component without assuming the
exhaustiveness theorem of the classification of finite simple groups
(CFSG). Construction, recognition, and exhaustiveness remain distinct
obligations. The actions, stabilizers, and comparison maps retained
here are intended as inputs to subsequent structure and recognition
theorems; Section~\ref{sec:contract} states the conditional recognition
questions precisely.

The mathematical sources combine established constructions with the
author's companion work. The Hall--Janko preprint~\cite{HallJankoNote}
and the Conway--Parker algebra manuscript~\cite{FischerAlgebra}
supply source mathematics for the corresponding entries. The present
article describes the models, their order and simplicity arguments,
and the structural relationships preserved in their formal
realization. The direct Monster-order work with
Seysen~\cite{MonsterOrder} supplies a starting point for future
formalization, not a completed entry in this release. Companion work
on integral Albert algebras and the rank-26 lattice and Cayley-plane
design develops the arithmetic and geometric direction
\cite{AlbertMassNote,CayleyPlaneNote}.

A broader objective is to explain why sporadic symmetries arise.
Vertex operator algebras (VOAs) provide a principal direction,
illustrated by the public Monster work in Section~\ref{sec:monster}
and by joint work with Lam on a VOA construction of the Fischer
groups~\cite{FischerVOA}, currently in preparation.
Retaining the mathematical objects is essential for this purpose:
the Fischer construction, for example, keeps its algebra, rays, and
incidence geometry together with the arguments determining the group
and its order. Formal verification makes these arguments and their
structural maps available for checked reuse.

The name recalls the \emph{Atlas of Finite Groups}~\cite{ATLAS85}.
Our initial scope is constructions and structural theorems; complete
character tables and maximal-subgroup lists are longer-term projects.
Sections~\ref{sec:elementary}--\ref{sec:fischer} present the
individual families and sporadic entries in catalogue order.
Section~\ref{sec:classical} also records the classical conventions
and comparison results. Sections~\ref{sec:monster}
and~\ref{sec:programme} distinguish further mathematical goals from
the completed catalogue. Related formalization projects are discussed
in Section~\ref{sec:related}. Technical implementation, declaration
interfaces, logical foundations, and verification records are collected
in Appendices~\ref{sec:libraries}--\ref{app:provenance}, including the
complete per-group source-line dependency table.

\section{The mathematical content of an entry}\label{sec:contract}

For a family label $X$, let $\mathcal A_X$ be an explicitly specified
parameter domain. An entry consists of a defined group $G_X(\lambda)$,
for $\lambda\in\mathcal A_X$, and proofs of
\begin{equation}\label{eq:entry}
 \lvert G_X(\lambda)\rvert=N_X(\lambda),\qquad
 G_X(\lambda)\text{ is simple},\qquad
 P_X\bigl(G_X(\lambda)\bigr),
\end{equation}
together with finiteness and, for the nonabelian entries, explicit
noncommutativity. Here $P_X$ describes independently defined mathematical
structure: an action, a code, a lattice, an algebra, or specified maps
between such objects. Neither $\mathcal A_X$ nor $P_X$ may incorporate the
desired order or simplicity assertion as an unproved premise. For cyclic
groups of prime order, commutativity is part of the intended conclusion.

For some entries, we also compute $k_2(G)$, the number of conjugacy
classes of nonidentity involutions, to distinguish nonisomorphic groups
of the same order; see Section~\ref{sec:equalorder}.

Here \emph{construction} refers to a specified mathematical model and
proofs of its properties. The development uses classical mathematics in
Lean, including classical choice; it need not provide an executable
enumeration of every group element.

\subsection{The chosen object and its full symmetry group}

For a symmetry-group model, equality between a generated subgroup and
the full automorphism group is a mathematical theorem. In a code or
lattice construction, generation or a stabilizer argument establishes
this equality before the order and simplicity assertions are attached
to the full group.

Central quotients come with their maps and kernels. Thus $\Co_0$ is
distinguished from $\Co_1$, and a Hermitian isometry group from its
scalar quotient. A different distinction is that between $\Fi_{24}$
and its derived subgroup $\Fi_{24}'$. Abstract group isomorphisms,
subgroup equalities, and equivariant identifications of actions are
recorded with the corresponding data.

\subsection{Conditional recognition and comparison}

Conditional on CFSG, order determines a finite simple group up to
isomorphism except for the pairs
$\{\Alt_8\cong\PSL_4(2),\PSL_3(4)\}$ and
$\{B_r(q),C_r(q)\}$ with $r\geq3$ and $q$ odd~\cite{OrderRecognition}.
These pairs are distinguished by the number $k_2(G)$ of conjugacy
classes of nonidentity involutions: the first has values $2$ and $1$,
and the second $r$ and $\lfloor r/2\rfloor+1$~\cite{Taylor,BorovikInvolutions}.
Consequently, assuming CFSG, the pair $(|G|,k_2(G))$ determines the
isomorphism type of a finite simple group. The formal
$B/C$ counts and both nonisomorphism results are described in
Section~\ref{sec:equalorder}; the general recognition theorem and the
$\Alt_8/\PSL_3(4)$ involution counts are proposed work.

For construction proofs, the relevant independence question concerns
the hypotheses and dependency graph of each theorem. Background results
and earlier constructions can be reused. A classification argument can
also be non-circular when the property being deduced was not used in
that argument, for example when it concerns smaller groups.
The Monster order illustrates why precise attribution matters:
Griess--Meierfrankenfeld--Segev, Corollary~3.7.3, uses substantial
information about other sporadic groups~\cite{GMS}, whereas
H\"ohn--Seysen~\cite{MonsterOrder} gives a direct computational
determination of the Monster and Baby Monster orders in the
Leech-lattice setting. The former dependency is more specific than an
invocation of the entire CFSG.

Recognition through an action or local geometry offers a natural
interface: a classification argument produces the structure, and a
comparison theorem identifies it with a constructed model. The actions,
stabilizers, and maps retained here are intended to support such theorems.

\section{Current constructions}\label{sec:inventory}

The catalogue records eight completed family packages and fifteen
sporadic constructions~\cite{AtlasSoftware}. In this article,
\emph{completed} means that the stated order, simplicity, and structural
results have the recorded proofs for the specified parameter
domains; it does not mean that every property of the groups has been
formalized. The verification scope is described in
Appendix~\ref{sec:workflow}. Family labels count construction packages
and can overlap up to isomorphism. Section~\ref{sec:classical} records
the status of the comparison maps.

\Needspace{6\baselineskip}
\begin{formalization}[Family packages]\label{thm:families}
The following family packages are completed. Here $n$ is the degree of
an alternating group or the matrix dimension of $\PSL_n$; $r$ denotes
Dynkin rank in the $B,C,D$ families.
For each finite field $F$ below, put $q=|F|$.
\begin{enumerate}[label=\textup{(\roman*)}]
\item For every prime $p$, the cyclic group $C_p$ has order $p$ and is
simple and commutative.
\item For every $n\geq5$, the alternating group $\Alt_n$ is nonabelian
simple and has order $n!/2$.
\item Let $F$ be any finite field, put $q=|F|$, and let $n\geq2$.
The central quotient
$\PSL_n(F)=\SL_n(F)/Z(\SL_n(F))$ has order
\begin{equation}\label{eq:pslorder}
 |\PSL_n(F)|=
 \frac{q^{n(n-1)/2}\prod_{i=2}^{n}(q^i-1)}{\gcd(n,q-1)}.
\end{equation}
It is simple if and only if
$(n,q)\notin\{(2,2),(2,3)\}$. For those admissible parameters it is
also nonabelian.
\item For every finite field $F$ and rank $r\geq1$, the projective
quadratic model $B_r(F)$ has the order in~\eqref{eq:bcorder}.
It is simple exactly outside $(r,q)=(1,2),(1,3),(2,2)$, and
nonabelian at the simple parameters. The full nonsimple model at
$(r,q)=(2,2)$ and its derived subgroup are distinguished.
\item For every finite field $F$ and rank $r\geq1$, the symplectic
quotient $C_r(F)=\PSp_{2r}(F)$ has the order in~\eqref{eq:bcorder}.
It is simple exactly outside $(r,q)=(1,2),(1,3),(2,2)$, and
nonabelian at the simple parameters. The full nonsimple model at
$(r,q)=(2,2)$ and its derived subgroup are distinguished. The
symplectic rank-zero model is trivial.
\item For every finite field $F$ and $r\geq4$, the projective
split-orthogonal model $D_r(F)$ has the order in~\eqref{eq:dorder}
with $\varepsilon=1$ and is nonabelian simple in every characteristic,
including $D_4(2)$.

\item For every finite field $F$, with $q=|F|$, let
$G_2(F)=\Aut_F(\mathbb O_{\mathrm{split}}(F))$ be the full group of
multiplication-preserving linear automorphisms of the constructed
split octonion algebra. Then
\begin{equation}\label{eq:g2order}
 |G_2(F)|=q^6(q^6-1)(q^2-1).
\end{equation}
It is simple if and only if $q>2$, and noncommutativity is proved
for every $q$. At $q=2$, an explicitly defined normal subgroup has
order $6048$ and index two; its own simplicity and identification
with $\PSU_3(3)$ are not included in this result.
\item Let $F$ be a finite field of characteristic three with
$|F|=q=3^{2m+1}$, where $m\geq1$. The group $ {}^2G_2(F,m)$ generated
by the Tits-twisted matrices in $\GL_7(F)$ is nonabelian
simple and has order
\begin{equation}\label{eq:reeorder}
 |{}^2G_2(F,m)|=q^3(q^3+1)(q-1).
\end{equation}
It acts faithfully and doubly transitively on a constructed set of
$q^3+1$ points. The full point stabilizer is the soluble Borel subgroup
of order $q^3(q-1)$. Field transport and an injective homomorphism into
the split-octonion group $G_2(F)$ are proved. The parameter $q=3$
and a graph-automorphism fixed-point characterization are not part of
this assertion.
\end{enumerate}
\end{formalization}

The orthogonal packages retain intrinsic kernels, full centers,
root generation, singular-line geometry, and field transport. Their
domains include comparison models outside the primary label ranges in
Table~\ref{tab:classical}. For type $A$ and $G_2$, finite-field witnesses
give existence at every admissible prime power; for the small Ree
groups, a canonical field witness is supplied for every $m\geq1$.
The notation $ {}^2G_2(F,m)$ follows the Lean model's explicit twist
parameter; $|F|=3^{2m+1}$ determines $m$ uniquely.

Cyclic and alternating entries reuse Mathlib's principal order and
simplicity results. The remaining packages combine general library
theorems with the constructions described below. Mathieu parameters
such as a dodecad or marked coordinates select objects in constructed
finite structures; standard choices are supplied.

\begin{formalization}[Sporadic packages]\label{thm:sporadics}
The development contains finite nonabelian simple groups with the
constructions and orders in Table~\ref{tab:sporadic}. These entries include
the specified code, lattice, stabilizer, scalar-quotient, or algebraic interfaces.
\end{formalization}

\begin{table}[!htbp]
\centering\small
\setlength{\tabcolsep}{5pt}
\begin{tabularx}{\textwidth}{@{}l r X@{}}
\toprule
Group & Order & Chosen construction\\
\midrule
$M_{11}$ & $7\,920$ & Point stabilizer in the dodecad model of $M_{12}$\\
$M_{12}$ & $95\,040$ & Golay dodecad stabilizer\\
$M_{22}$ & $443\,520$ & Pointwise stabilizer of two Golay coordinates\\
$M_{23}$ & $10\,200\,960$ & Stabilizer of one Golay coordinate\\
$M_{24}$ & $244\,823\,040$ & Full binary Golay code automorphism group\\
\addlinespace
$\Co_1$ & $4\,157\,776\,806\,543\,360\,000$ & Leech isometry group modulo $\{\pm I\}$\\
$\Co_2$ & $42\,305\,421\,312\,000$ & Stabilizer of a Leech vector of squared norm $4$\\
$\Co_3$ & $495\,766\,656\,000$ & Stabilizer of a Leech vector of squared norm $6$\\
$\mathrm{McL}$ & $898\,128\,000$ & Pointwise stabilizer of a Leech $2$--$2$--$3$ triangle\\
$\mathrm{HS}$ & $44\,352\,000$ & Pointwise stabilizer of a Leech $2$--$3$--$3$ triangle\\
\addlinespace
$\Suz$ & $448\,345\,497\,600$ & Full Eisenstein Hermitian group modulo six scalars\\
$J_2$ & $604\,800$ & Full icosian Hermitian group modulo two signs\\
\addlinespace
$\Fi_{22}$ & $64\,561\,751\,654\,400$ & Common centralizer of two marked commuting involutions modulo their subgroup\\
$\Fi_{23}$ & $4\,089\,470\,473\,293\,004\,800$ & Centralizer of a marked involution modulo that involution\\
$\Fi_{24}'$ & $1\,255\,205\,709\,190\,661\,721\,292\,800$ & Parity kernel in the constructed Fischer ray group\\
\bottomrule
\end{tabularx}
\caption{The fifteen completed sporadic constructions. A triangle label records half the three squared edge lengths.
The groups are stabilizers of vectors or configurations, with the
pointwise and setwise distinctions retained.}
\label{tab:sporadic}
\end{table}

The auxiliary groups, notably $\Co_0$ and the retained scalar covers,
are not counted as additional simple groups. Nor are small isomorphisms
between family members counted as new entries. The eleven sporadics outside that completed catalogue are
$J_1,J_3,J_4,\mathrm{He},\mathrm{Ru},\mathrm{O}'\mathrm{N},
\mathrm{HN},\mathrm{Ly},\mathrm{Th},B$, and the Monster.
The classical extension is discussed in Section~\ref{sec:classical},
and the remaining uniform Lie-type programme in Section~\ref{sec:programme}. The Tits group is a separate exceptional Lie-type entry,
not an additional sporadic group.

\section{Cyclic and alternating groups}\label{sec:elementary}

\subsection{Cyclic groups of prime order}

For each prime $p$, the cyclic entry is the standard group $C_p$ of
order $p$. Its order, simplicity, and commutativity are the first
assertions of Formalization result~\ref{thm:families}. The construction
reuses Mathlib's cyclic-group results; the declaration interfaces are
recorded in Appendix~\ref{sec:libraries}. These are the commutative
entries of the catalogue, kept separate from the nonabelian families.

\subsection{Alternating groups}

For every $n\geq5$, the alternating entry is $\Alt_n$, with its natural
permutation action, order $n!/2$, and nonabelian simplicity. Here again
the principal order and simplicity results are supplied by Mathlib.
The degree convention and range are those of Formalization
result~\ref{thm:families}; low-degree groups outside that range are
not additional entries. The comparison maps with classical groups,
including $\Alt_5$, $\Alt_6$, and $\Alt_8$, are treated in
Section~\ref{sec:classical} after the classical models have been
introduced.

\section{The classical series \texorpdfstring{$A_r(q)$}{Ar(q)}: projective special linear groups}
\label{sec:psl}

For $A_r(q)=\PSL_{r+1}(q)$, write $n=r+1$ for the matrix dimension.
The projective linear construction starts from matrix groups and
central quotients, followed by parameterized generation and
simplicity arguments. Its Mathlib inputs are recorded in
Appendix~\ref{sec:libraries}.

\subsection{Order and the determinant quotient}

The starting point is the standard count of ordered bases,
\[
 |\GL_n(q)|=\prod_{i=0}^{n-1}(q^n-q^i).
\]
The scalar center of $\GL_n(q)$ has order $q-1$. The center of
$\SL_n(q)$ consists of the scalar matrices $\lambda I$ with
$\lambda^n=1$ and therefore has order $d=\gcd(n,q-1)$, by cyclicity
of $\F_q^{\times}$. These counts give~\eqref{eq:pslorder}.
The development also proves a divisibility-safe identity
\[
 |\PSL_n(q)|(q-1)d=\prod_{i=0}^{n-1}(q^n-q^i),
\]
so that the natural-number division in the displayed order formula is
known to be exact.

The determinant gives a stronger structural result. Over any field $F$
and in positive dimension there is an isomorphism
\begin{equation}\label{eq:detquot}
 \PGL_n(F)/\PSL_n(F)\ \cong\ F^{\times}/(F^{\times})^n,
\end{equation}
where $\PSL_n(F)$ on the left denotes its canonical image in $\PGL_n(F)$.
Scaling a matrix changes its determinant by an $n$th power, so the map
is well defined. If $\det(g)=u^n$, then $u^{-1}g$ has determinant one
and represents the same projective transformation; this identifies the
kernel. Surjectivity follows from that of the determinant.
The formalized equivalence includes compatibility with matrix
representatives. It is consequently usable independently of the finite
order calculation.

\subsection{Generation and simplicity}

Elementary transvections are written $x_{ij}(a)=I+aE_{ij}$.
The diagonal/transvection reduction available in Mathlib is completed
by explicit two-coordinate identities, giving generation of $\SL_n(F)$
by transvections. For distinct $i,j,k$, the commutator identity
\begin{equation}\label{eq:comm}
 [x_{ik}(a),x_{kj}(b)]=x_{ij}(ab),
 \qquad [g,h]=ghg^{-1}h^{-1},
\end{equation}
proves perfectness when $n\geq3$. Generation and perfectness pass to
the projective quotient.

For a line $L=Fv$, consider the subgroup of transformations
\[
 u_{\varphi}(x)=x+\varphi(x)v,\qquad \varphi(v)=0.
\]
It is abelian, is normal in the stabilizer of $L$, and transforms
covariantly under conjugation. Its projective images, as $L$ varies,
generate $\PSL_n(F)$. Together with the faithful primitive projective
action, these facts fit the following criterion.

\begin{proposition}[Iwasawa criterion]\label{prop:iwasawa}
Let a nontrivial perfect group $G$ act faithfully and primitively on a
set $\Omega$ with at least two points. Suppose that for some
$\alpha\in\Omega$, the stabilizer $G_\alpha$ has an abelian normal
subgroup $A$ whose conjugates generate $G$. Then $G$ is nonabelian simple.
\end{proposition}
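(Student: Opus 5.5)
The plan follows the classical Iwasawa argument. We show that an arbitrary normal subgroup $N\trianglelefteq G$ is either trivial or all of $G$; nonabelianness is then a separate short step. Fix $N$ and first use primitivity. Since $N$ is normal, its orbits on $\Omega$ form a $G$-invariant partition, i.e.\ a system of blocks. Primitivity leaves two possibilities: either $N$ acts trivially on $\Omega$, or $N$ is transitive. In the first case faithfulness gives $N=1$ at once. This split is where we use the block-system formulation of primitivity; if primitivity is instead defined through maximality of $G_\alpha$, we use the equivalent statement that $NG_\alpha$ is a subgroup containing $G_\alpha$, so it equals $G_\alpha$ or $G$.

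Now suppose $N$ is transitive. Then $G=NG_\alpha$, since for $g\in G$ there is $n\in N$ with $n\alpha=g\alpha$, so $n^{-1}g\in G_\alpha$. Next we show that $NA$ is normal in $G$. Write $g=nh$ with $n\in N$ and $h\in G_\alpha$. Then
\[
 gNAg^{-1}=n\,(hNh^{-1})(hAh^{-1})\,n^{-1}=nNAn^{-1}=NA,
\]
using $N\trianglelefteq G$ and $A\trianglelefteq G_\alpha$. Every conjugate $gAg^{-1}$ therefore lies in $NA$, and since these conjugates generate $G$ we get $NA=G$. It follows that $G/N=NA/N\cong A/(A\cap N)$, which is abelian because $A$ is. Hence $[G,G]\le N$, and perfectness gives $N=G$.

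This establishes that the only normal subgroups are $1$ and $G$. Together with the hypothesis that $G$ is nontrivial, $G$ is simple. For nonabelianness, suppose $G$ were abelian. Then $G=[G,G]=1$, contradicting nontriviality. (Abelianness would also force a faithful transitive action to be regular, but the perfectness argument is shorter.) The hypothesis $|\Omega|\ge2$ is only used to guarantee that primitivity is the nondegenerate notion, so that the dichotomy ``trivial or transitive'' for normal subgroups is available.

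I expect the main obstacle to be the primitivity step: deriving that a normal subgroup acts trivially or transitively. In a formal setting this requires packaging the $N$-orbits as a $G$-invariant block system. I would first try to use Mathlib's existing lemma that a normal subgroup of a primitive (pretransitive) action is either contained in the kernel or pretransitive, which I expect is available as a library lemma. Failing that, I would argue directly: for $g\in G$ one has $g(N\beta)=N(g\beta)$, so each $N$-orbit is a block, and is therefore a singleton or all of $\Omega$. The rest of the argument is elementary subgroup algebra.
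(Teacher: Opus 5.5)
Your proposal is correct and follows essentially the same route as the paper. Both arguments treat the $N$-orbits as a $G$-invariant partition and use faithfulness and primitivity to make a nontrivial $N$ transitive. Both then get $G=NG_\alpha$ and show that $G/N$ is abelian, so perfectness forces $N=G$. Your explicit check that $NA\lhd G$, giving $NA=G$ and $G/N\cong A/(A\cap N)$, unpacks the paper's remark that every conjugate of $A$ has the same image as $A$ in $G/N$.
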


\begin{proof}
Let $1\ne N\lhd G$. The $N$-orbits form a $G$-invariant partition.
Faithfulness excludes the partition into singleton orbits for such $N$,
and primitivity therefore makes $N$ transitive. Hence $G=NG_\alpha$.
In $G/N$, the image of every conjugate of $A$ is consequently the image
of $A$ itself, because $A\lhd G_\alpha$. The conjugates generate $G$,
so $G/N$ is abelian. Perfectness implies $G/N=1$, and $N=G$.
A nontrivial perfect group cannot be abelian.
\end{proof}

The existing formal criterion and projective-action infrastructure are
reused. The added general-rank generation and perfectness arguments
discharge its hypotheses. This proves simplicity for every field in
dimension at least three. For finite fields in dimension two, the
existing Mathlib theorem supplies simplicity when $q\geq4$.
The two exceptional cases are proved nonsimple using their projective
actions and orders. The final family statement gives the exact exception set. Numerical
examples follow by specialization.

\section{The classical series \texorpdfstring{$B,C,D$}{B, C, D} and comparison isomorphisms}\label{sec:classical}

The untwisted families $A,B,C,D$ have completed packages. Following
the type-$A$ construction in Section~\ref{sec:psl}, we describe the
$B,C,D$ models in that order before giving their comparisons.
Throughout, $q=|F|=p^f$ for a finite field $F$, and $r$ denotes
\emph{Dynkin rank}, not matrix dimension. We write $\Alt_m$ for an
alternating group to distinguish it from the Lie-type symbol $A_r(q)$.

\subsection{The odd-dimensional quadratic series \texorpdfstring{$B_r(q)$}{Br(q)}}\label{sec:type-b}

The quadratic construction starts from
\[
 Q_B(x,y,z)=\sum_{i=1}^{r}x_i y_i+z^2.
\]
It constructs the subgroup generated by Siegel transformations and
divides by its scalar subgroup, proved to be its full center. In the
completed parameter ranges, the elementary subgroup is identified in
odd characteristic with the determinant/spinor kernel. In
characteristic two it is the full quadratic isometry group: the
odd-dimensional space has one-dimensional polar radical, and its
symplectic comparison is proved.

The geometric package has a faithful singular-line action in every
positive rank, transitivity over finite fields, and primitivity from
rank two. Its order is the $B_r(q)$ formula in~\eqref{eq:bcorder};
the exact simplicity exceptions are those in Formalization
result~\ref{thm:families}. The odd-characteristic $B_2$ construction
uses its symplectic $C_2$ image, and the geometric proof uses transport
through a $B_1$ complement. Thus the maps $B_1\cong A_1$ and
$B_2\cong C_2$ are part of the construction, even though the
symplectic series is presented next.

\subsection{The symplectic series \texorpdfstring{$C_r(q)$}{Cr(q)}}\label{sec:type-c}

The model is the symplectic central quotient $C_r(F)=\PSp_{2r}(F)$.
Its order agrees with that of $B_r(F)$, as in~\eqref{eq:bcorder}.
For positive rank it is simple exactly outside
$(r,q)=(1,2),(1,3),(2,2)$ and is nonabelian at the simple parameters;
the rank-zero model is trivial. The full nonsimple rank-two binary
model is kept distinct from its derived subgroup.

The retained comparisons identify the rank-one model with
$\PSL_2(F)$, the rank-two model with $B_2(F)$ over every finite
field, and the higher-rank models with $B_r(F)$ in characteristic
two. In odd characteristic and rank at least three, the same order
does not give an isomorphism: Section~\ref{sec:equalorder} separates
the two series by their involution-class counts.

\subsection{The split-orthogonal series \texorpdfstring{$D_r(q)$}{Dr(q)}}\label{sec:type-d}

The split quadratic construction starts from
\[
 Q_D(x,y)=\sum_{i=1}^{r}x_i y_i.
\]
As in type $B$, the subgroup generated by Siegel transformations is
divided by its scalar subgroup, proved to be its full center. In the
completed parameter ranges it is the determinant/spinor kernel in
odd characteristic and the intrinsic Dickson kernel in characteristic
two. The model retains a faithful primitive singular-line action in
rank at least four, together with root generation and field transport.

For every finite field and $r\geq4$, the projective group has the
order in~\eqref{eq:dorder} with $\varepsilon=1$ and is nonabelian
simple, including $D_4(2)$. The auxiliary models $D_2,D_3$ are
retained for the product and exterior-square comparisons below;
they are not additional primary family labels.

\subsection{Rank conventions and order formulae}

The unitary $ {}^2A$ and minus-orthogonal $ {}^2D$ families and a
common Chevalley framework are further construction tasks.
The superscript in $ {}^2A_r$ denotes twisting; it is not a power of
$A_r$. In twisted notation, $q$ is the size of the fixed field:
$ {}^2A_r(q)=\PSU_{r+1}(q)$ is constructed using a Hermitian space over
$\F_{q^2}$. References writing $ {}^2A_r(q^2)$ denote the same family
with a different field convention. We use the fixed-field convention
also for $ {}^2D_r(q)$, $ {}^2E_6(q)$, and $ {}^3D_4(q)$.

\begin{table}[!htbp]
\centering\small
\renewcommand{\arraystretch}{1.12}
\begin{tabularx}{\textwidth}{@{}l l l X@{}}
\toprule
Type & Classical model & Rank range & Exclusions for simplicity\\
\midrule
$A_r(q)$ & $\PSL_{r+1}(q)$ & $r\geq1$ & $(r,q)=(1,2),(1,3)$\\
$B_r(q)$ & $\POm_{2r+1}(q)$ & $r\geq2$ & $(r,q)=(2,2)$\\
$C_r(q)$ & $\PSp_{2r}(q)$ & $r\geq3$ & None in this rank range\\
$D_r(q)$ & $\POm^+_{2r}(q)$ & $r\geq4$ & None in this rank range\\
$ {}^2A_r(q)$ & $\PSU_{r+1}(q)$ & $r\geq2$ & $(r,q)=(2,2)$\\
$ {}^2D_r(q)$ & $\POm^-_{2r}(q)$ & $r\geq4$ & None in this rank range\\
\bottomrule
\end{tabularx}
\caption{Classical families and proposed primary rank conventions,
not the full implemented parameter domains. Types $A,B,C,D$ have completed packages. The unitary and
minus-orthogonal families remain to be constructed.
The even-characteristic $B/C$ comparison is proved for the retained models.}
\label{tab:classical}
\end{table}

The symbols denote central quotients. Exceptional small parameters
require conventions for $\Omega$ as in
Taylor~\cite[Chapters~8, 10--12]{Taylor}.

The proposed primary ranges $r\geq1,2,3,4$ for $A,B,C,D$ remove
low-rank repetitions while keeping the comparison models in the
library. In particular, $B,C$ include all positive ranks, and $D_2,D_3$
are available. Even-characteristic $B/C$ isomorphisms and isolated
coincidences still give overlaps between primary labels.

For reference, the classical order formulae are
\begin{align}
 |A_r(q)|&=\frac{q^{r(r+1)/2}}{(r+1,q-1)}
             \prod_{i=2}^{r+1}(q^i-1),\notag\\
 |B_r(q)|=|C_r(q)|&=\frac{q^{r^2}}{(2,q-1)}
             \prod_{i=1}^{r}(q^{2i}-1),\label{eq:bcorder}\\
 |D_r^{\varepsilon}(q)|&=
       \frac{q^{r(r-1)}(q^r-\varepsilon)}{(4,q^r-\varepsilon)}
             \prod_{i=1}^{r-1}(q^{2i}-1),\label{eq:dorder}\\
 |{}^2A_r(q)|&=\frac{q^{r(r+1)/2}}{(r+1,q+1)}
             \prod_{i=2}^{r+1}(q^i-(-1)^i).\notag
\end{align}
Here $(a,b)=\gcd(a,b)$, $D_r^+=D_r$, and
$D_r^-={}^2D_r$~\cite{Taylor,Wilson}. The $A,B,C$ and split-$D$
formulae are proved for the models in the ranges stated above;
the unitary and minus-$D$ formulae are targets for subsequent development.

\subsection{Uniform coincidences and low ranks}

Table~\ref{tab:uniform-isos} records the uniform identifications of
central quotients, including auxiliary models below the primary rank
ranges. Here $ {}^2A_1$ abbreviates $\PSU_2$; type $A_1$ has no
nontrivial Dynkin-diagram twist.

\begin{table}[!htbp]
\centering\small
\renewcommand{\arraystretch}{1.15}
\begin{tabularx}{\textwidth}{@{}p{0.52\textwidth} X@{}}
\toprule
Isomorphism & Parameters and formalization status\\
\midrule
$A_1(q)\cong B_1(q)\cong C_1(q)\cong\PSU_2(q)$
 & $B_1\cong A_1$ and $C_1\cong A_1$ proved over every field;
   the $\PSU_2$ link remains planned. Simple for $q\geq4$.\\
$B_2(q)\cong C_2(q)$
 & Proved over every finite field, including the nonsimple $q=2$ case.\\
$D_2(q)\cong A_1(q)\times A_1(q)$
 & Proved over every field; the split rank-two model is not a simple entry.\\
$D_3(q)\cong A_3(q)$
 & Proved over every finite field for the split six-dimensional model.\\
$ {}^2D_2(q)\cong A_1(q^2)$
 & All $q$ mathematically; four-dimensional minus-type comparison still planned.\\
$ {}^2D_3(q)\cong{}^2A_3(q)$
 & All $q$ mathematically; six-dimensional minus-type comparison still planned.\\
$B_r(2^f)\cong C_r(2^f)$
 & Proved for $r\geq1$, $f\geq1$; primary-label overlaps persist for $r\geq3$.\\
\bottomrule
\end{tabularx}
\caption{Uniform identifications and their current proof status.
For $D_2(2)$ we retain the central quotient of the two $\SL_2(2)$
factors, not a possibly different derived-full-orthogonal convention.
The table includes comparison models outside the primary rank ranges.}
\label{tab:uniform-isos}
\end{table}

The exterior-square representation produces the six-dimensional
orthogonal model from a four-dimensional linear space. Related
Klein-correspondence arguments give low-rank comparisons. In
characteristic two, quotienting the polar radical of an odd-dimensional
quadratic space gives a symplectic space. Over a finite field this has
a converse lift~\cite[Chapters~11--12]{Taylor}. The completed untwisted
comparisons implement these constructions.

\subsection{Isolated coincidences across the classification list}

Table~\ref{tab:exceptional-isos} lists the isolated alternating and
Lie-type coincidences, including simple derived groups at small
parameters~\cite{ATLAS85,Wilson,ATLASOnline}. Its last column distinguishes
proved maps from known isomorphisms awaiting formal comparison.

\begin{table}[!htbp]
\centering\small
\setlength{\tabcolsep}{4pt}
\begin{tabularx}{\textwidth}{@{}r >{\raggedright\arraybackslash}p{0.43\textwidth} >{\raggedright\arraybackslash}X@{}}
\toprule
Order & Isomorphic simple groups & Status of comparison maps\\
\midrule
$60$ & $\Alt_5\cong A_1(4)\cong A_1(5)$ &
Both links with $\Alt_5$ proved.\\
\addlinespace
$168$ & $A_1(7)\cong A_2(2)$ & Proved through a common eight-point action.\\
\addlinespace
$360$ & $\Alt_6\cong A_1(9)$\newline
$\phantom{\Alt_6}\cong B_2(2)'\cong C_2(2)'$ &
All displayed links proved; the derived-group maps are explicitly exported.\\
\addlinespace
$504$ & $A_1(8)\cong{}^2G_2(3)'$ &
The $q=3$ derived-subgroup package and this comparison remain to be formalized.\\
\addlinespace
$6\,048$ & $ {}^2A_2(3)\cong G_2(2)'$ &
An index-two normal subgroup of $G_2(2)$ of order $6048$ is constructed;
its simplicity and the unitary comparison remain open in Lean.\\
\addlinespace
$20\,160$ & $\Alt_8\cong A_3(2)$ & Proved through the binary deleted permutation quadratic module.\\
\addlinespace
$25\,920$ & $ {}^2A_3(2)\cong B_2(3)\cong C_2(3)$ &
$B_2\cong C_2$ proved; the unitary link remains planned.\\
\bottomrule
\end{tabularx}
\caption{Isolated cross-series coincidences and their formal status,
up to the uniform identifications in Table~\ref{tab:uniform-isos}.
A prime denotes the derived subgroup. The status column records the
available comparison maps.}
\label{tab:exceptional-isos}
\end{table}

The binary quadratic construction gives $B_2(2)\cong\Sym_6$ and
$B_2(2)'\cong\Alt_6$.
The full group has order $720$ and its derived group order $360$.
The proved $B_2/C_2$ map identifies the full binary
symplectic model, and its restriction to the derived subgroups
is exported. The separate $\PSL_2(9)\cong\Alt_6$ comparison uses
the projective line and the ten partitions of six letters into two
triples. Composing these proved maps completes the order-$360$
row.

For the other small parameters, $G_2(2)'$ has index two in $G_2(2)$,
and $ {}^2G_2(3)'$ has index three in $ {}^2G_2(3)$.
The Tits group $ {}^2F_4(2)'$ has index two in $ {}^2F_4(2)$ but gives
no further coincidence in the table. The 26 sporadic groups supply no
additional coincidences with each other or with the alternating and
Lie-type simple groups~\cite{ATLAS85,Wilson,ATLASOnline}.
Prime cyclic groups are separate; $\Alt_3\cong C_3$ lies outside the
chosen alternating range. These tables do not classify nonsimple
groups or central covers, and their exhaustiveness is not a Lean
result here.

\subsection{Equal orders and proved nonisomorphisms}\label{sec:equalorder}

The order collisions in Section~\ref{sec:contract} require
\emph{nonisomorphism} certificates, not additional comparison
isomorphisms. Write $k_2(G)$ for the number of conjugacy classes of
nonidentity involutions, not the number of involutions themselves.
For every finite field of odd order and every positive rank, the
formalized involution-class theorems give
\begin{equation}\label{eq:k2bc}
 k_2(B_r(q))=r,\qquad
 k_2(C_r(q))=\lfloor r/2\rfloor+1.
\end{equation}
Consequently, for $r\geq3$ the constructed models satisfy
\[
 |B_r(q)|=|C_r(q)|,
 \qquad B_r(q)\not\cong C_r(q).
\]
The nonisomorphism follows from the isomorphism invariance of $k_2$.

\begin{table}[!htbp]
\centering\small
\begin{tabularx}{\textwidth}{@{}l l X@{}}
\toprule
Equal-order pair & $k_2$ values & Formal status\\
\midrule
$B_r(q),\ C_r(q)$ & $r,\ \lfloor r/2\rfloor+1$ &
Order equality and nonisomorphism proved for odd $q$, $r\geq3$.\\
\addlinespace
$\Alt_8,\ \PSL_3(4)$ & $2,\ 1$ &
Nonisomorphism proved via Sylow $2$-subgroup centers;
these $k_2$ counts are not certified.\\
\bottomrule
\end{tabularx}
\caption{The exceptional equal-order pairs. In the second row, the
$k_2$ values are known mathematically, while the formal nonisomorphism
uses a different invariant. Conditional
on CFSG these two kinds exhaust the nonisomorphic simple-group
order collisions~\cite{OrderRecognition,Taylor,BorovikInvolutions}.}
\label{tab:noniso}
\end{table}

The proved order-$20\,160$ obstruction constructs the upper
unitriangular subgroups in $\SL_4(2)$ and $\SL_3(4)$ and embeds them
in the corresponding projective groups. Both images have order $64$
and are Sylow $2$-subgroups; symbolic commutation formulas show that
their centers have orders $2$ and $4$, respectively. An isomorphism
of the ambient groups would preserve this invariant. Transport through
the independently proved $\PSL_4(2)\cong\Alt_8$ map gives the
alternating-group obstruction.
The optional involution-class counts remain separate: $\Alt_8$ has
the two even cycle types $2^2 1^4$ and $2^4$, whereas $\PSL_3(4)$
has one involution class~\cite[Chapter~4]{Taylor}.
Both nonisomorphism routes are unconditional, but neither a complete
catalogue-wide collision theorem nor the conditional recognition
theorem for $(|G|,k_2(G))$ is claimed as a Lean result.

\subsection{Comparison maps and remaining targets}

The comparisons marked as proved in Tables~\ref{tab:uniform-isos}
and~\ref{tab:exceptional-isos} are explicit multiplication-preserving
maps with proved inverses, rather than identifications by order.
Field isomorphisms also induce isomorphisms of the corresponding
orthogonal models.

The five isolated comparisons between the implemented families use
natural actions or the deleted permutation quadratic module. Their
order-$60$ argument also gives an elementary recognition theorem for
nonabelian simple groups of that order. Unitary, twisted-$D$, and the
remaining exceptional links are further targets. Each comparison of
actions or central covers needs its own maps, kernels, and equivariance
statement; the group equivalences alone concern the simple quotients.

\section{The exceptional family \texorpdfstring{$G_2(q)$}{G2(q)} from split octonions}\label{sec:g2}

The first completed exceptional Lie-type family is constructed directly
from the split octonion algebra. The formal model retains the algebra,
its full linear automorphism group, and its singular-point geometry;
see~\cite[Chapter~4]{Wilson} for the classical background.
The uniform proof includes characteristics two and three.

\subsection{The algebra and its full automorphism group}

Write $\mathbb O_{\mathrm{split}}(F)=\bigoplus_{i=1}^{8}F x_i$.
The multiplication is the bilinear extension of the fixed integral
split-octonion table used in the construction. In these coordinates,
for $a=\sum a_i x_i$,
\[
 1=x_4+x_5,\qquad \operatorname{tr}(a)=a_4+a_5,\qquad
 N(a)=a_1a_8+a_2a_7+a_3a_6+a_4a_5.
\]
Conjugation is $\bar a=\operatorname{tr}(a)1-a$. The formal development
proves both identity laws, bilinearity, the quadratic identity
$a^2-\operatorname{tr}(a)a+N(a)1=0$, the conjugate-product identities,
norm multiplicativity, and reversal of products under conjugation.
These are polynomial identities over arbitrary commutative rings.
The same algebra therefore applies in characteristic two.

The group $G_2(F)$ is the full subgroup of $\GL(\mathbb O_{\mathrm{split}}(F))$
preserving multiplication. Preservation of the identity, trace, norm,
and conjugation is proved from this definition. For example,
surjectivity forces the identity to be fixed; for a nonscalar $a$,
comparison of its quadratic identity before and after applying an
automorphism forces the trace and norm coefficients to agree.
Thus the invariants follow from preservation of multiplication. The natural eight-dimensional representation
is faithful, and coefficientwise field isomorphisms transport the
group and its action.

\subsection{Explicit subgroups, full stabilizers, and order}

Six root transformations are defined and verified directly from the
multiplication table. Their additive laws and parameter injectivity
are proved. Ordered products give a bijection $F^6\longrightarrow U$
with explicit multiplication and inverse formulas. The diagonal torus
$T$ is identified with $F^{\times}\times F^{\times}$, normalizes $U$,
and intersects it trivially. Consequently,
\[
 |U|=q^6,\qquad |T|=(q-1)^2,\qquad |UT|=q^6(q-1)^2.
\]
Together with explicit Weyl transformations, these maps are proved
to generate the full octonion automorphism group.

The singular vectors are the nonzero $a$ with
$\operatorname{tr}(a)=N(a)=0$. Their number is $q^6-1$.
Indeed, after eliminating one coordinate using the trace equation,
the count reduces to pairs of three-vectors and a scalar satisfying
$u\cdot v=t^2$: for each $u\ne0$ and each $t$, the $v$-fibre has
$q^2$ elements; for $u=0$, one has $t=0$ and $q^3$ choices of $v$.
Including zero gives $q^6$ vectors. The projective singular-point set
$\mathcal P$ therefore has size
\[
 |\mathcal P|=\frac{q^6-1}{q-1}.
\]

Coordinate reductions prove transitivity and determine the full
stabilizers. The pointwise stabilizer of the opposite vectors $x_1,x_8$
is isomorphic to $\SL_2(F)$; the stabilizer of $x_1$ has a parameter
bijection with $F^5\times\SL_2(F)$. The full stabilizer $P$ of the
line $Fx_1$ is the explicit parabolic, with a bijection
\[
 F^{\times}\times F^5\times\SL_2(F)\ \longrightarrow\ P.
\]
This bijection is a parameterization of the stabilizer. Since $|\SL_2(F)|=q(q^2-1)$, orbit--stabilizer gives
\[
 |G_2(F)|=\frac{q^6-1}{q-1}\,(q-1)q^5\,q(q^2-1),
\]
which is~\eqref{eq:g2order}. The order proof is independent of
perfectness, simplicity, and the binary exception.

\subsection{Singular-point geometry and simplicity}

The action on $\mathcal P$ is faithful. Its point stabilizer has four
orbits, with lengths
\begin{equation}\label{eq:g2subdegrees}
 1,\qquad q(q+1),\qquad q^3(q+1),\qquad q^5.
\end{equation}
Coordinate reductions identify these strata as orbits. Their geometry
also proves primitivity in every characteristic,
including $q=2$. The block-size argument leaves only the projective
points of $W=\langle x_1,x_2,x_3\rangle$ as a possible nontrivial
proper block. But the linear span of products $W\cdot W$ is the
distinguished line $Fx_1$. Any automorphism stabilizing $W$ must
therefore fix that point, contradicting transitivity of a block
stabilizer on a nontrivial block.

At $Fx_1$, a pair of commuting root groups gives an abelian subgroup
$R_0\cong(F^2,+)$ of order $q^2$, normal in the full point stabilizer.
Transport defines such a subgroup at every singular point, with
independence of the chosen transporter and covariance under $G_2(F)$.
For $q>2$, explicit root and Weyl relations show that their conjugates
generate the full group. The argument chooses a field element distinct
from $0$ and $1$ and applies also when $q=3$. Torus commutators then
give perfectness. The faithful primitive action and these local
abelian groups satisfy Proposition~\ref{prop:iwasawa}, proving simplicity.
Noncommutativity is established separately by two explicit root maps,
even at $q=2$. The positive simplicity proof does not use the numerical
order of $G_2(F)$.

\subsection{The binary exception}

For $F=\F_2$, let $H$ be the normal closure of the long-root
groups. The construction produces a second geometry: the
two-dimensional singular subspaces on which all products vanish.
There are exactly $63$ such zero-product planes. The count comes from
an exhaustive check of $64\times64$ binary coordinate pairs; the
group acting is still the full octonion automorphism group.

Two root involutions fix respectively $7$ and $9$ of these planes,
so their permutation signs are $(-1)^{28}=1$ and $(-1)^{27}=-1$.
The resulting sign homomorphism is surjective and contains $H$ in
its kernel. Independently, primitivity of the singular-point action
makes the nontrivial normal subgroup $H$ transitive, and the explicit
point-stabilizer generators give $[G_2(2):H]\leq2$. It follows that
$H$ is exactly the sign kernel. Using the already proved order,
\[
 |G_2(2)|=12096,\qquad [G_2(2):H]=2,\qquad |H|=6048.
\]
Field transport gives the same conclusion for every field of
cardinality two. This proves the negative case in the exact simplicity
theorem. The present package does not prove $H=G_2(2)'$, simplicity
of $H$, or its identification with $\PSU_3(3)$; the conventional
derived-group coincidence in Table~\ref{tab:exceptional-isos} remains
a separate comparison target.

\section{Small Ree groups from twisted matrices}\label{sec:ree}

Let $F$ be a finite field of characteristic three with
$q=|F|=3^{2m+1}$ and $m\geq1$. The model is the subgroup of $\GL_7(F)$ generated
by three explicit root maps, a diagonal torus, and a Weyl involution.
The coordinates follow B\"a\"arnhielm's explicit matrix model
\cite[Section~3]{BaarnhielmRee}; the geometric and simplicity arguments
draw on Wilson's elementary constructions~\cite{WilsonRee,WilsonReeAlgebra}.
Its order and simplicity are proved
for this generated group, without an abstract Ree-group theorem or
recognition by order.

\subsection{Field twists and the root group}

The twisting maps are the iterated Frobenius automorphisms
\[
 \theta(x)=x^{3^m},\qquad \sigma(x)=x^{3^{m+1}}.
\]
The finite-field identity $x^{3^{2m+1}}=x$ gives
$\theta\sigma=\sigma\theta=\mathrm{id}$ and
$\sigma^2(x)=x^3$. These identities, their compatibility with field
equivalences, and the bound $q\geq27$ are proved before the group
arguments.

Write $S(a,b,c)$ for the product of the three root matrices in their
specified order. Its entries recover $a,b,c$, and direct matrix
multiplication gives
\begin{align*}
 S(a,b,c)S(d,e,f)
  =S\bigl(&a+d,\ b+e-a\theta(d)^3,\\[-2pt]
          &c+f-db+a\theta(d)^3d-a^2\theta(d)^3\bigr).
\end{align*}
Together with the inverse formula, this identifies the root
subgroup $U$ with the parameter set $F^3$, so $|U|=q^3$.
Projection to the first parameter has abelian image and abelian kernel,
and hence $U$ is soluble. The diagonal subgroup $H\cong F^\times$
normalizes $U$, with conjugation on parameters
\[
 (a,b,c)\longmapsto
 \left(\frac{a\theta(\lambda)^3}{\lambda^2},
       \frac{b\lambda}{\theta(\lambda)^3},\frac{c}{\lambda}\right).
\]
Unique factorization in $B=UH$, the equality $U\cap H=1$, and the
split projection $B\to H$ with kernel $U$ prove that $B$ is soluble
and has order $q^3(q-1)$.

\subsection{The point geometry and the full stabilizer}

The Ree--Tits point set consists of a point at infinity and the
projective classes of the first rows of the matrices $S(a,b,c)$.
Projective parameter injectivity gives $q^3+1$ points, and $U$ acts
regularly on the affine points. The matrix convention uses row vectors;
inverse right multiplication supplies the corresponding left action.

The main geometric step is to prove preservation by the Weyl element
and to identify the full stabilizer of infinity. Seven exterior-kernel
relations and seven Tits-twisted bilinear equations characterize the
point chart. Their invariance under negative coordinate reversal proves
Weyl preservation. They also imply that the last affine coordinate
vanishes only at the affine origin. The generated group is consequently
doubly transitive.

Every generator preserves an explicit alternating cross product and the
twisted bilinear relation. At infinity the relation recovers a distinguished
two-dimensional subspace, and the cross product recovers the full flag.
A matrix fixing infinity must therefore be triangular. Applying the
opposite flag at the affine origin shows that the two-point stabilizer
is diagonal; the invariant tensors force it to be exactly $H$.
Regularity of $U$ then gives the full one-point stabilizer $B$.
This proves faithfulness as well as the order formula
\eqref{eq:reeorder} by orbit--stabilizer. The stabilizer calculation completes the order proof.

\subsection{Perfectness, simplicity, and comparison with octonions}

For $q\geq27$, torus commutators place all three root maps in the
derived subgroup. Short characteristic-three matrix identities then
place the Weyl involution and the minus-one torus element there.
Weyl inversion supplies all torus squares. Since $q\equiv3\pmod4$,
every nonzero scalar is a square or the negative of a square, so the
whole torus lies in the derived subgroup. The group is therefore perfect.

Simplicity uses a separately proved general criterion: a nontrivial
perfect group with a faithful quasiprimitive action and a soluble point
stabilizer is simple. Indeed, any nontrivial normal subgroup $N$ is
transitive, so the stabilizer surjects onto $G/N$. This quotient is both
soluble and perfect and is therefore trivial. The Ree action is
primitive, and its full stabilizer is the soluble subgroup $B$.
Explicit noncommuting root elements supply noncommutativity. This
simplicity argument does not use the numerical group-order theorem.

Entrywise field equivalence induces a group isomorphism of
the models. A separate comparison embeds the group into the
split-octonion $G_2(F)$ of Section~\ref{sec:g2}. In the fixed
seven-coordinate basis, the scalar--vector identification with the
eight-coordinate octonions is
\[
 (t,v)\longmapsto
 (v_0,v_1,v_2,t-v_3,t+v_3,-v_4,-v_5,-v_6).
\]
In characteristic three the octonion product becomes
\[
 (s,v)(t,w)=(st-B(v,w),\ sw+tv+v\mathbin{\times}w),
\]
where $B$ is the preserved symmetric form and $\times$ the preserved
cross product. Extending inverse row multiplication while fixing the
scalar coordinate thus gives an injective homomorphism into the full
octonion automorphism group. This uses structural octonion results,
not the order or simplicity of $G_2(F)$. No equality with the fixed
points of an exceptional graph automorphism is asserted.

Specializations of the uniform order theorem at $q=27,243,2187$
give, respectively,
\[
10\,073\,444\,472,\quad
49\,825\,657\,439\,340\,552,\quad
239\,189\,910\,264\,352\,349\,332\,632.
\]
These values specialize the uniform theorem to constructed finite fields. The public family excludes
$q=3$; the derived-group comparison with $\PSL_2(8)$ in
Table~\ref{tab:exceptional-isos} remains a separate task.

\section{Sporadic groups from retained structures}\label{sec:sporadic}

The sporadic branch follows connected constructions. The code used to
construct the Mathieu groups is retained in the Leech lattice, and the
resulting lattice and its full symmetry group are retained in later
stabilizer and scalar-centralizer constructions. Standard background
for these relationships is provided by~\cite{CS}; the formal models and
recorded completion results are those of the \atlas\ development
\cite{AtlasSoftware}. We first specify the shared objects, then describe
the individual groups in the order of Table~\ref{tab:sporadic}.
Their order and simplicity assertions are those of Formalization
result~\ref{thm:sporadics}.

\subsection{Shared code data: trio code, hexacode, and Golay code}\label{sec:golay-data}

The initial construction passes from a quadratic alphabet and its trio
code to the additive Kleinian hexacode, and then to the binary Golay code
by two explicit twists. The coordinate organization
\[
 3\ \longrightarrow\ 3\times2\ \longrightarrow\ (3\times2)\times4
\]
is retained along with markings, partitions, and recovery maps. The
quadratic functions on the intermediate alphabets are part of the data.
Any comparison with an $\F_4$-linear hexacode is a separate mathematical
statement.

The constructed binary code $C$ is a doubly even self-dual
$[24,12,8]$ code. Its weight-eight supports give the Witt design
$S(5,8,24)$. Write $\mathcal A=\Aut(C)$ for its full coordinate
automorphism group. Fix a dodecad $D$, a point $d\in D$, and two
distinct coordinates $i,j$; standard choices are supplied by the
construction. A subscript $D$ denotes a setwise dodecad stabilizer,
whereas coordinate subscripts denote pointwise stabilizers.

\subsection{The Mathieu groups}

\paragraph{\texorpdfstring{$M_{11}$}{M11}.}
The model is $(\mathcal A_D)_d$, the stabilizer of a point in the
dodecad stabilizer. It has order $7\,920$. Thus the entry is specified
directly by the marked Golay data; equivalently, it is the point
stabilizer in the dodecad model of $M_{12}$.

\paragraph{\texorpdfstring{$M_{12}$}{M12}.}
The model is $\mathcal A_D$, the setwise stabilizer of the chosen
dodecad, with its retained action on that dodecad. Its order is
$95\,040$.

\paragraph{\texorpdfstring{$M_{22}$}{M22}.}
The model is $\mathcal A_{i,j}$, fixing the two marked coordinates
individually, and has order $443\,520$. It is kept distinct from
the larger setwise stabilizer of the coordinate pair.

\paragraph{\texorpdfstring{$M_{23}$}{M23}.}
The model is $\mathcal A_i$, the stabilizer of one marked Golay
coordinate, and has order $10\,200\,960$.

\paragraph{\texorpdfstring{$M_{24}$}{M24}.}
The model is the full coordinate automorphism group $\mathcal A$.
Equality with the automorphism group of the octad design is proved
in the same ambient symmetric group. Sextet geometry supplies local
structure: every tetrad has a unique sextet completion, there are
$1771$ sextets, and the full sextet stabilizer has order $138240$.
The subsequent global action arguments give order $244823040$,
five-transitivity, and simplicity.

The actions on complements or dodecads, the corresponding residual
designs, and the embeddings relating the marked choices are retained
for the smaller Mathieu groups. Their catalogue order above does not
replace the common Golay construction by an induction from smaller
groups.

\subsection{Shared lattice data: the Leech lattice and its full isometry group}\label{sec:leech-data}

The retained Golay code defines an integral lattice $\Lambda$.
Its rank $24$, evenness, positive definiteness, unimodularity, minimum
squared norm $4$, and generation by minimal vectors are proved from the
construction. The group $\Co_0$ is its full linear isometry group.

The shell counts for squared norms $4,6,8$ are respectively
\[
 196560,\qquad 16773120,\qquad 398034000.
\]
They arise from exhaustive coordinate descriptions of lattice vectors.
The global order proof uses minimal vectors and orthogonal
pairs. An explicitly constructed nonmonomial isometry, together with
the full monomial subgroup, generates $\Co_0$. The resulting order is
\[
 |\Co_0|=8\,315\,553\,613\,086\,720\,000.
\]
The action on the $8292375$ intrinsic crosses, and its primitivity,
follow from the minimal-vector and orthogonal-pair analysis.

\subsection{The Conway groups}

\paragraph{\texorpdfstring{$\Co_1$}{Co1}.}
The simple model is the scalar quotient
\[
 \Co_1=\Co_0/\langle -I\rangle.
\]
It has order $4\,157\,776\,806\,543\,360\,000$. Finiteness and the
relevant kernel are proved before the final simplicity argument,
so the full lattice isometry group and its simple quotient remain
distinct objects.

\paragraph{\texorpdfstring{$\Co_2$}{Co2}.}
The model is the stabilizer in $\Co_0$ of a Leech vector of squared
norm $4$. Its order is $42\,305\,421\,312\,000$.

\paragraph{\texorpdfstring{$\Co_3$}{Co3}.}
The model is the stabilizer in $\Co_0$ of a Leech vector of squared
norm $6$. Its order is $495\,766\,656\,000$; additional retained
geometry includes its $276$-point action.

\subsection{The McLaughlin and Higman--Sims groups}

\paragraph{\texorpdfstring{$\mathrm{McL}$}{McL}.}
The model is the pointwise stabilizer of a Leech $2$--$2$--$3$
triangle and has order $898\,128\,000$. As in
Table~\ref{tab:sporadic}, the triangle label records half the
three squared edge lengths.

\paragraph{\texorpdfstring{$\mathrm{HS}$}{HS}.}
The model is the pointwise stabilizer of a Leech $2$--$3$--$3$
triangle and has order $44\,352\,000$. Its retained geometry
includes the rank-three $100$-point graph action.

The vector and triangle stabilizers retain their embeddings in the
lattice isometry group. Their lattice representations and stabilizer
descriptions allow these groups to be reused without reconstructing
unrelated permutation models.

\subsection{The Suzuki sporadic group \texorpdfstring{$\Suz$}{Suz}}

The constructions of $\Suz$ and $J_2$ enrich the same Leech lattice with
additional scalar structure. They illustrate why integral comparison
maps matter: a rational isometry between two ambient spaces must be
shown to map the lattices onto one another. The documented
comparisons prove both integral containments.

For $\Suz$, a constructed fixed-point-free isometry $\rho$ satisfies
\[
 \rho^3=I,\qquad \rho^2+\rho+I=0.
\]
Its full centralizer $C_E=C_{\Co_0}(\rho)$ is identified with the full
Hermitian isometry group of the corresponding Eisenstein lattice.
Its center consists of six scalar isometries. The constructed simple
quotient has order
\[
 |C_E|/6=2\,690\,072\,985\,600/6=448\,345\,497\,600.
\]
The action on $232960$ intrinsic frames and the thirteen suborbits of
a frame stabilizer are proved from the lattice. The sign quotient
$C_E/\{\pm I\}$ embeds into $\Co_1$ and maps onto $\Suz$ with central
kernel of order three. This keeps the embedded triple extension
distinct from its simple quotient.

\subsection{The Hall--Janko group \texorpdfstring{$J_2$}{J2}}

For $J_2$, let $\mathcal I$ be the constructed icosian order over
$\Z[(1+\sqrt5)/2]$, and let $C_H$ be the full quaternion-linear
Hermitian isometry group of the associated rank-three congruence
lattice. The underlying integral lattice is identified with the
retained Leech lattice. The $120$ norm-one icosian units act by right
scalars, and $C_H$ is their full centralizer in $\Co_0$.
The scalar action and the center of $C_H$ are separate objects.

There are $37800$ vectors of quaternionic Hermitian norm two, and
each associated right quaternionic line contains exactly $120$ of them.
Thus there are $315$ root lines. Each is orthogonal to ten others,
and every orthogonal pair has a unique completing root line. This
gives an intrinsic count of frames,
\[
 \frac{315\cdot10}{6}=525.
\]
The full coordinate-frame stabilizer has order $2304$. Explicit
quaternionic reflections generate the full group and prove frame
transitivity, whence
\begin{equation}\label{eq:j2order}
 |C_H|=525\cdot2304=1209600,\qquad
 |C_H/\{\pm I\}|=604800.
\end{equation}
The point stabilizer has suborbits of sizes
$1,10,32,32,80,160$. Their proved geometry yields primitivity.
Reflection relations yield perfectness, and the cyclic subgroup
generated by a point reflection supplies the abelian normal subgroup
in Proposition~\ref{prop:iwasawa}. The sign quotient is therefore
nonabelian simple. This example exhibits the chain from arithmetic and
local geometry to a full symmetry group, its order, and its simplicity.

\section{The Fischer constructions}\label{sec:fischer}

The development contains concrete models and proofs of order, simplicity,
finiteness, and noncommutativity for $\Fi_{22}$, $\Fi_{23}$,
and $\Fi_{24}'$. Their construction packages also retain the algebra,
reflecting rays, and residue geometry described below. All three
entries use one ambient ray group constructed from the algebra;
the order of presentation is not an inductive construction from the
smaller Fischer groups.

\subsection{The common algebra and ray group}

The starting point is the explicit Conway--Parker coordinate
algebra constructed from Golay and Parker data~\cite{FischerAlgebra}. The algebra has dimension $783$ over
$E=\mathbb Q(\omega)$, where $\omega$ is a primitive cube root of unity;
its coordinate index is the disjoint union of the 24 Golay coordinates
and the 759 octads. A proved tensor identity supplies the reflecting
automorphisms. Its normalized $\mu_3$-rays are three-element scalar orbits.
Let $F$ be the subgroup of permutations of the displayed reflecting
rays generated by the constructed root involutions.

The finite-algebra construction produces three families of rays of sizes
\[
 24,\qquad 759\cdot2^5,\qquad \binom{24}{2}2^{10},
\]
whose sum is $306936$. Frame geometry determines the full ray group,
which has a primitive rank-three action with subdegrees
$1,31671,275264$.

For marked commuting basic involutions $t_i$, a residue is obtained
from the centralizer of the marked set by dividing by the subgroup
it generates. The chosen marked involutions are part of the
construction. The order and simplicity assertions below are proved
for the resulting defined models; neither assertion is supplied
as a hypothesis.

\subsection{\texorpdfstring{$\Fi_{22}$}{Fi22}}

For two marked commuting basic involutions, the model is
\[
 G_{22}=C_F(t_i,t_j)/\langle t_i,t_j\rangle\quad(i\ne j).
\]
The quotient kernel has order four. The group has a faithful primitive
rank-three action on $3510$ points and proved order
\[
 |\Fi_{22}|=2^{17}3^9 5^2\,7\,11\,13.
\]
The construction retains the residue projection and its surjectivity,
generation by distinguished involutions, and perfectness, together
with the simplicity proof for this quotient.

\subsection{\texorpdfstring{$\Fi_{23}$}{Fi23}}

For one marked basic involution, the model is
\[
 G_{23}=C_F(t_i)/\langle t_i\rangle.
\]
The quotient kernel has order two. The group has a faithful primitive
rank-three action on $31671$ points and proved order
\[
 |\Fi_{23}|=2^{18}3^{13}5^2\,7\,11\,13\,17\,23.
\]
Here too the residue projection, its surjectivity, generation by
distinguished involutions, and perfectness are retained, together
with simplicity. The subdegrees of the two residue actions are
\[
\begin{array}{c|r|rrr}
 & \text{degree} & \multicolumn{3}{c}{\text{subdegrees}}\\
 G_{22} & 3510 & 1 & 693 & 2816\\
 G_{23} & 31671 & 1 & 3510 & 28160
\end{array}
\]
The completion record also retains the nonsplit central double cover
of $G_{22}$ inside $G_{23}$.

\subsection{\texorpdfstring{$\Fi_{24}'$}{Fi24 prime}}

The full three-transposition ray group $F$ is denoted $\Fi_{24}$.
The simple model $\Fi_{24}'$ is the kernel of its parity
homomorphism. This kernel is proved equal to $F'$, with $[F:F']=2$,
and has order
\[
 |\Fi_{24}'|=2^{21}3^{16}5^2 7^3\,11\,13\,17\,23\,29.
\]
The factor of two between $|\Fi_{24}|$ and $|\Fi_{24}'|$ is therefore
part of the derived-subgroup analysis. The proved action of the
simple group on the same $306936$ rays is faithful and transitive;
the recorded result does not include a rank-three assertion for
this subgroup.

The retained perfect central triple cover of $\Fi_{24}'$ acts
faithfully on the $783$-dimensional $E$-algebra. The quotient's
action is projective; no ordinary representation of the same degree
is inferred for it. In the full semilinear group, odd symmetries
invert the cubic scalars, so that scalar kernel is not central.
Thus the full ray group, its simple derived subgroup, and the
central triple cover remain distinct objects.

\section{The Monster and vertex operator algebras}\label{sec:monster}

H\"ohn--Seysen~\cite{MonsterOrder} determines the Monster and Baby
Monster orders directly from computations with axes in the
$196884$-dimensional Griess algebra, using Seysen's
\code{mmgroup} implementation~\cite{SeysenImplementation}. The Monster
is generated by a subgroup of shape $2^{1+24}_{+}.\mathrm{Co}_1$ and a
triality element. Orbit and stabilizer calculations use the Golay code,
Leech lattice, and Conway groups.

VOA arguments complement these calculations. The axes correspond, after
normalization, to Ising vectors in the Moonshine module $V^{\natural}$.
Orbifold and fusion arguments control involutions and their
centralizers, while the framed VOA structure identifies
$\operatorname{Aut}(V^{\natural})$ with the automorphisms of the Griess
algebra preserving its invariant bilinear form. Combining the computed
order with results of Carnahan and Borcherds, the paper proves that the
constructed Monster is this full automorphism group and has exactly two
conjugacy classes of involutions~\cite[Sections~2 and~5]{MonsterOrder}.
These are public mathematical results supporting future formalization;
the Monster and Baby Monster are among the outstanding \atlas\ entries.

This work continues the Moonshine programme of
Borcherds~\cite{BorcherdsVA} and Frenkel--Lepowsky--Meurman~\cite{FLM}.
Borcherds's proof of the Monstrous Moonshine conjectures
\cite{BorcherdsMoonshine} is a longer-term formalization goal, not a
result of the present development. The Baby Monster vertex operator
superalgebra provides a related source of sporadic symmetry
\cite{BabyVOA}. Further examples include the $10$-dimensional quadratic
space over $\F_2$ recovered from the modules of $V_{\sqrt2 E_8}^{+}$ in
Shimakura's treatment, with automorphism group $O_{10}^{+}(2)$
\cite[Theorem~4.5]{Shimakura}, and Griess--Lam's VOA existence proof of
the Monster~\cite{GriessLam}.

A VOA construction of the Fischer groups is being developed in
joint work with Lam~\cite{FischerVOA}. This manuscript in preparation
is separate from the Conway--Parker coordinate-algebra source
\cite{FischerAlgebra} used in Section~\ref{sec:fischer}.
The latter underlies the completed Fischer formalization; the VOA
construction is not included in the present release.

\section{Further development}\label{sec:programme}

Completion of the full sporadic catalogue is a longer-term objective.
The remaining eleven sporadic groups in Section~\ref{sec:inventory}
call for geometric constructions or verified computational arguments;
structural and comparison theorems for completed entries are a separate
continuing task.
The Hall--Janko, Fischer, Albert-algebra, and Cayley-plane source
manuscripts develop the geometric direction. For O'Nan's group,
Lee's 2025 account records the absence at that time of a computer-free
construction~\cite{LeeComputations}; Jansen--Wilson give explicit
modular constructions~\cite{ONan}. Formalizing such arguments requires
proofs of the algorithms and checks of the finite data they use.

A common Lie-type framework should construct groups from root data,
integral forms, and root operators, then establish generation, Bruhat
decomposition, centers, orders, and simplicity. Steinberg~\cite{Steinberg}
and Geck~\cite{Geck} provide sources. Beyond the completed families,
the targets are the unitary and minus-orthogonal families,
$F_4,E_6,E_7,E_8$, and the graph--field twists
$ {}^2E_6$ and $ {}^3D_4$.
The Suzuki and large Ree families
\[
 {}^2B_2(2^{2e+1}),\qquad {}^2F_4(2^{2e+1})\qquad(e\geq1)
\]
require exceptional characteristic-two endomorphisms. Their initial
nonsimple parameters and the Tits group $ {}^2F_4(2)'$ need separate
statements~\cite{Wilson}. Comparison with the existing $G_2$ and small
Ree models, including the exceptional fixed-point description of the
latter, is part of this programme.

For integration with classification arguments, priorities include
centralizers of prime-order elements, Sylow and local subgroups, outer
automorphisms, central extensions, and recognition geometries. The
comparison maps of Section~\ref{sec:classical} allow these results to
pass between the models encountered in different proofs.

Formalizing the Monster and Baby Monster results discussed in
Section~\ref{sec:monster} requires both verified finite computations and
infrastructure for graded spaces, vertex operations, lattice extensions,
and the relevant subalgebras or commutants. These developments would
connect the finite-group constructions to their vertex-algebraic
sources.

\section{Related formalization projects}\label{sec:related}

The Odd Order project of Gonthier and collaborators formalized the
Feit--Thompson theorem in Coq (now Rocq), using Mathematical
Components~\cite{OddOrder}. Completed in 2012 and described in 2013,
it proves that finite groups of odd order are soluble and supplies
extensive reusable finite-group and representation theory.
Mathlib supplies the cyclic, alternating, and rank-two PSL results
used here, together with group-action and simplicity
infrastructure~\cite{Mathlib,MathlibGroups}.

The Qiuzhen-CFSG project and FormaTheoria workflow report a Lean
development whose public README lists the Odd Order and Bender--Suzuki
theorems as completed~\cite{FormaTheoria}. Their reconstruction of
interdependent theories from the literature complements the construction
work here. This account follows the authors' public description and does
not report an independent rebuild of their project.

Independently, Yawara Ishida's \code{odd-order} project reports a
Lean 4/Mathlib formalization of the Feit--Thompson theorem, developed
from the textbooks of Isaacs, Bender--Glauberman, and P\'eterfalvi,
together with an extensive supporting finite-group and character-theory
library~\cite{IshidaOddOrder}. Its focus is the Odd Order Theorem and
its prerequisites rather than a catalogue of explicit simple-group
constructions. This description follows the public repository and does
not report an independent rebuild.

Two Utrecht theses from 2024 precede this work: Roxy van de Kuilen's
partial formalization of the first Janko group~\cite{Kuilen} and Erik
van der Plas's study of the Conway groups in Lean~\cite{Plas}.
Their precise declaration interfaces merit comparison with the
constructions presented here.

The public repository \nolinkurl{KitaKen1/finite-simple-groups-lean}
reports kernel-checked order and simplicity certificates for fourteen
sporadic permutation groups~\cite{KitaKen}:
\[
 M_{11},M_{12},M_{22},M_{23},M_{24},\mathrm{HS},J_2,J_1,
 \mathrm{McL},\Co_3,\Suz,\Co_2,\Fi_{22},\mathrm{He}.
\]
Its README also lists nine individual Lie-type groups, including
$G_2(4)$ and $\PSL_3(4)$. External programs generate candidate
certificates whose required properties are checked in Lean. This
description follows the public repository consulted on 18 September
2026, without an independent audit of its proofs.

Parameterized families require constructions and proofs for all
admissible ranks and fields. Root-group or geometric arguments provide
one route; a uniformly verified algorithm could provide another.
For individual groups, comparison isomorphisms could connect the
permutation models with the code, lattice, and algebra models here,
allowing their structural results to be combined.

\FloatBarrier

\appendix
\section{Lean implementation and mathematical interfaces}\label{sec:libraries}

\subsection{Lean and Mathlib inputs}

Lean supplies dependent type theory, elaboration, and the proof kernel;
Mathlib supplies mathematical definitions, theorems, and proof-producing
automation. In particular, \code{Group}, \code{Nat.card}, and
\code{IsSimpleGroup} are defined in
\leanname{Mathlib.Algebra.Group.Defs},
\leanname{Mathlib.SetTheory.Cardinal.Finite}, and
\leanname{Mathlib.GroupTheory.Subgroup.Simple}.
Appendix~\ref{sec:foundations} records their interpretation and the
logical axioms used.

Table~\ref{tab:Mathlib} lists representative mathematical inputs.
Declaration dependencies identify the portion used by a particular
proof; file imports make a broader interface available.

\begin{table}[!htbp]
\centering\small
\begin{tabularx}{\textwidth}{@{}>{\raggedright\arraybackslash}p{0.60\textwidth} X@{}}
\toprule
Representative modules (prefix \code{Mathlib.} omitted) & Mathematical input\\
\midrule
\leanname{Data.Fintype.Card}\newline
\leanname{SetTheory.Cardinal.Finite}\newline
\leanname{GroupTheory.Index}
& Finite types, cardinalities, subgroup indices, and counting identities.\\
\addlinespace
\leanname{GroupTheory.QuotientGroup.Basic}\newline
\leanname{GroupTheory.Subgroup.Centralizer}\newline
\leanname{GroupTheory.Commutator.Basic}
& Quotient groups, centralizers, commutators, and the maps used in central and derived-group constructions.\\
\addlinespace
\leanname{GroupTheory.GroupAction.Primitive}\newline
\leanname{GroupTheory.GroupAction.MultipleTransitivity}\newline
\leanname{GroupTheory.GroupAction.Iwasawa}
& Primitive and multiply transitive actions and the Iwasawa simplicity criterion.\\
\addlinespace
\leanname{GroupTheory.SpecificGroups.Cyclic}\newline
\leanname{GroupTheory.SpecificGroups.Alternating}\newline
\leanname{GroupTheory.SpecificGroups.Alternating.Simple}
& Existing cyclic and alternating group theory, including principal order and simplicity results.\\
\addlinespace
\leanname{Data.ZMod.Basic}\newline
\leanname{FieldTheory.Finite.GaloisField}
& Modular arithmetic and finite fields of prime-power order.\\
\addlinespace
\leanname{LinearAlgebra.Matrix.GeneralLinearGroup.Card}\newline
\leanname{LinearAlgebra.Matrix.SpecialLinearGroup}\newline
\leanname{LinearAlgebra.Matrix.Transvection}
& Matrix groups, general-linear-group cardinalities, and elementary matrices.\\
\addlinespace
\leanname{LinearAlgebra.Matrix.ProjectiveSpecialLinearGroup}\newline
\leanname{LinearAlgebra.Projectivization.Action}\newline
\leanname{LinearAlgebra.Projectivization.PSL.PSL2}
& Projective groups and actions, including the existing dimension-two simplicity theorem.\\
\addlinespace
\leanname{LinearAlgebra.Basis.Basic}\newline
\leanname{LinearAlgebra.BilinearForm.DualLattice}\newline
\leanname{LinearAlgebra.QuadraticForm.Basic}\newline
\leanname{Algebra.Quaternion}
& Bases, bilinear and quadratic forms, dual lattices, and quaternion arithmetic underlying the geometric constructions.\\
\bottomrule
\end{tabularx}
\caption{Representative mathematical inputs from the pinned Mathlib
dependency; they are
not an exhaustive or minimal import list.}
\label{tab:Mathlib}
\end{table}

The cyclic entry uses \leanname{ZMod.card} and
\leanname{isSimpleGroup_of_prime_card}; the alternating entry uses
\leanname{nat_card_alternatingGroup} and
\leanname{alternatingGroup.isSimpleGroup}.
For PSL, the existing \code{rank\_two\_simple} theorem and
\leanname{MulAction.IwasawaStructure.isSimpleGroup} support the
general-rank arguments in Section~\ref{sec:psl}.
\atlas\ develops the octonion multiplication and stabilizers,
Golay--Leech structures, and Conway--Parker algebra on this foundation.
Automation such as \code{ring} and \code{norm\_num} produces proof
terms checked by the kernel.

Shared mathematics can occur below public group interfaces. For example,
McL and HS use \leanname{Atlas.Sporadic.Conway3.card}, while the Fischer
entries share algebra and residue lemmas even when another Fischer
group's principal wrapper is absent from their dependencies.
The source-line footprints in Appendix~\ref{app:provenance} count
\atlas\ declarations; their derivations also use Lean and Mathlib.
The pinned environment is recorded there.

\subsection{Principal declaration interfaces}\label{app:interfaces}

The following identifiers connect the mathematical models described in
Sections~\ref{sec:elementary}--\ref{sec:fischer} to their formal
statements. They are collected here to keep the construction arguments
independent of implementation-level notation.

\paragraph{Classical comparisons.}
The comparisons in Tables~\ref{tab:uniform-isos}
and~\ref{tab:exceptional-isos} are Lean \code{MulEquiv} objects:
multiplication-preserving maps with proved inverses. The classical index
exports \code{b1EquivA1}, \code{b2EquivC2}, \code{d2EquivProduct},
\code{d3EquivA3}, and \code{evenBEquivC} under
\leanname{Atlas.Comparisons.Classical}.
The symplectic rank-one map is
\leanname{Atlas.Symplectic.rankOnePSL}; field equivalences induce
\code{bFieldEquiv} and \code{dFieldEquiv}.
The binary comparisons are
\leanname{Atlas.Orthogonal.b2BinaryEquivSymmetric} and
\leanname{Atlas.Orthogonal.b2BinaryDerivedEquivAlternating}.
The theorem
\leanname{Atlas.Comparisons.Classical.oddB_same_order_nonisomorphic_C}
records equal order and the nonexistence of a \code{MulEquiv}, using
the conjugacy-class invariance of $k_2$.

\paragraph{Split octonions.}
The model \leanname{Atlas.G2.Model} uses a generic construction of
multiplication-preserving linear automorphisms that requires no
associativity assumption.
The principal declarations in \leanname{Atlas.G2} are
\code{card\_Model}, \code{isSimple}, and \code{isSimple\_iff},
with \code{exists\_mul\_ne\_mul} for noncommutativity. The public
\leanname{Atlas.typeG2_construction} package collects the results for
the concrete model, and \leanname{Atlas.typeG2_prime_power} supplies
finite fields for every prime power. The corresponding
existence endpoints retain the concrete model as their witness.
The evidence for order, positive simplicity, the binary exception,
and the complete interface is separately scoped in
Appendix~\ref{sec:workflow}.
The exhaustive binary coordinate-pair check in
Section~\ref{sec:g2} uses ordinary kernel reduction.

\paragraph{Small Ree groups.}
The model is \leanname{Atlas.ReeG2.Model F m}.
The principal declarations are \leanname{Atlas.ReeG2.order} and
\leanname{Atlas.ReeG2.simple}. The public package
\leanname{Atlas.typeReeG2_construction} and existence theorem
\leanname{Atlas.exists_typeReeG2_parameter} use the same model.

\paragraph{Fischer groups.}
The declaration \leanname{Atlas.Fischer.rootGeneratedRayGroup}
defines the ray group $F$ of Section~\ref{sec:fischer}. The
\code{derived} and \code{index} theorems identify the parity kernel
with $F'$ and prove $[F:F']=2$.
For each Fischer entry, the public declarations \code{card}, \code{simple}, \code{finite}, and
\code{exists\_mul\_ne\_mul} establish the corresponding basic assertions.
The expanded signatures contain no supplied order or simplicity
hypothesis. The namespaces under \code{Atlas.Sporadic} are
\code{Fischer22}, \code{Fischer23}, and \code{Fischer24Prime}.
Construction and existence interfaces accompany these results.
The retained extension statements have their own statement and
dependency certificates.

\FloatBarrier

\section{Formal statements and logical foundations}\label{sec:foundations}

For a constructed carrier $G$, \code{Group G} supplies operations and
proofs of the group laws. Subgroups and quotients obtain these from
their ambient groups and proved normality. On finite types,
\code{Nat.card G} is the number of elements; it is zero on infinite
types. Finiteness is also exported as \code{Finite G}.
\code{IsSimpleGroup G} asserts nontriviality and that every normal
subgroup is trivial or the whole group. For the nonabelian entries,
noncommutativity is proved separately.

For example, the fixed model \leanname{Atlas.Sporadic.Fischer22.Model}
is the centralizer quotient $G_{22}$ of Section~\ref{sec:fischer}.
Its public statements are
\[
\begin{array}{ll}
 \code{finite}: & \code{Finite}\ G_{22},\\
 \code{card}: & \code{Nat.card}\ G_{22}=64561751654400,\\
 \code{simple}: & \code{IsSimpleGroup}\ G_{22},\\
 \code{exists\_mul\_ne\_mul}: & \exists x,y\in G_{22},\ xy\ne yx.
\end{array}
\]
All four declarations lie in \code{Atlas.Sporadic.Fischer22}; the
display suppresses printer-level universe and typeclass arguments.

The recorded dependency audits report only \code{propext}
(propositional extensionality), \code{Classical.choice}, and
\code{Quot.sound} (equality of related quotient representatives).
They exclude \code{sorryAx}, project-specific mathematical axioms, and
native-evaluator trust shortcuts. The proofs have been accepted by
Lean's standard kernel and, for the complete exported dependency
closure of the release selection, by the independently implemented
Nanoda checker (Appendix~\ref{sec:verification}). These checks still
rely on the foundations, the checking implementations, the export
pipeline, and the host system. The axiom report must be read together
with the full theorem type and the definitions of its carrier,
parameters, and instances: an axiom check alone does not establish
that the statement expresses the intended mathematics
\cite{LeanValidation,Soundness}.

\section{Development and recorded verification}\label{sec:workflow}

\subsection{Mathematical specification and implementation}

The author specifies the objects, theorems, proof architecture, and
permitted dependencies. Codex, OpenAI's coding agent, searches the
pinned library and implements and repairs proofs against Lean's
elaborator and kernel. The author used ChatGPT, including
GPT-6 Pro, for mathematical discussion and specifications, and Codex
with GPT-6 Astra for Lean implementation. These names identify the
development tools used; they are not a controlled comparison of
models or a measurement of autonomous performance.

The author directs and maintains \atlas\ and is responsible for its
mathematical claims and formal specifications. Generative AI also
assisted the drafting and revision of this article; the author is
responsible for its final wording and mathematical content. Mathematical sources
and joint work are credited where used. Compiled statements and
dependency reports support review of the results; reproducing the
AI's search is unnecessary for checking the resulting proof terms.

\subsection{Recorded verification}\label{sec:verification}

The completed runs of 18 September 2026 concern release commit
\code{935474b} and a selection of 911 declarations covering catalogue
roles, structural properties, and comparison interfaces for the eight
families and fifteen sporadic entries. The selection counts declarations,
not groups. The checks address the formal statements, their dependencies,
and acceptance by two checker implementations; the precise source state
and checking details are given in Appendix~\ref{app:provenance}.

\paragraph{Build and dependency audit.}
The project-source build completed successfully, with 5\,677 Lake jobs.
The compiled-statement and transitive dependency audit covered 911 roots
and 70\,266 nodes, including type and proof dependencies, and found only
the three permitted logical axioms of Appendix~\ref{sec:foundations}.
The build used pinned library caches, rather than recompiling every
Mathlib source file.

\paragraph{Independent proof replay.}
Nanoda, an independently implemented checker written in Rust, accepted
all 70\,891 declarations in the complete exported dependency closure of
the 911 selected roots, with no errors. This includes the library
dependencies required by those roots, not unrelated imported theorems
or all of Mathlib. Nanoda's exported-declaration count and the dependency
audit's node count are produced by different tools and are reported
separately.

\paragraph{Comparator.}
The same 911 targets passed Comparator: 869 theorem targets and 42
non-theorem targets represented by equality wrappers, with no definition
holes. Statement and definition comparisons, permitted-axiom checks,
standard Lean kernel replay, and the quotient post-check all succeeded.
The Challenge and Solution import the same frozen release. This checks
agreement with that formal reference and replay acceptance, rather than
comparison against an independently authored mathematical specification.
Comparator used the standard Lean kernel; Nanoda was a separate run.

\paragraph{Statement consistency.}
Additional inspections of the compiled order and simplicity statements
confirmed that each pair uses the same model, parameters, and inherited
group law. For the eight families, admissibility predicates contain the
intended numerical restrictions rather than assumed conclusions, and
$q$ is the cardinality of the same field $F$ wherever field parameters
occur. The Lie-rank and matrix-dimension conventions agree. For the
fifteen sporadic entries, the inspections also checked Mathieu markings
and the distinctions between stabilizers, simple quotients, covering
groups, and the $\Fi_{24}'$ parity kernel. These are inspections of the
carrier applications and hypotheses, not additional kernel runs or
external recognition theorems.

The earlier development catalogue check validated 886 declaration
references, including aliases and structural exports. The source-footprint
measurement used a separate selection of 46 principal order and
simplicity declarations for 23 entries, with a complete graph of
65\,676 dependency nodes and project source ranges recovered throughout.
Table~\ref{tab:footprint} retains that measurement. Neither the broader
release audit nor the replay counts replace this line-count calculation.
Family, geometric, and comparison audits use their own selections.

In particular, the $G_2$ and small Ree order audits exclude their
perfectness and simplicity arguments. Their positive simplicity
audits exclude the numerical group-order theorems. The Ree proofs
also exclude untwisted $G_2$ results; its octonion embedding uses a
separate structural scope.
Finite tables checked by ordinary kernel reduction are part of some proofs.

\subsection{Repository and checking instructions}\label{sec:release}

The source repository is
\mbox{\url{https://github.com/Moonshine-in-Kansas/atlas}}.
The release catalogue is \mbox{\leanname{docs/catalogue.json}}; the
build and dependency-audit instructions are \code{AUDIT.md} and
\mbox{\leanname{verification/check.rb}}. Appendix~\ref{app:provenance}
gives the tool revisions, checking commands, and scope and resource use
of the completed runs.

The release's default license is the \emph{ATLAS Research and Attribution
License 1.0}, with third-party rights preserved. The license and
third-party notices accompany the source.

\section{Source state and reproducibility}\label{app:provenance}

The recorded build, dependency audit, Nanoda replay, Comparator run,
and statement inspections concern the release snapshot at commit
\leanname{935474b0c3678e053e1152a49ed6593bd8ad44c9}
(18 September 2026)~\cite{AtlasSoftware}. All 2\,299 recorded
project-source and toolchain files matched the checked snapshot when
the Comparator success and statement inspections were recorded.

The line measurements below refer instead to the development snapshot
of 17 September 2026, 07:34:33 UTC. The recorded verification runs do not
remeasure this earlier inventory or its declaration footprints.

\subsection{Source inventory}

The measured development inventory contains 2\,295 tracked Lean files under
\code{Atlas/}, with 161\,772 lines, and the root driver
\code{Atlas.lean}, with 706 lines. The total is therefore
\textbf{162\,478 lines in 2\,296 files}. This includes audit and helper
modules, comments, and blank lines; it excludes untracked drafts,
Mathlib, and verification-generated Lean files. The small Ree-specific
modules, including finite-field support and the public package, account
for 54 files and 3\,470 lines, without shared generic infrastructure.

The physical file sizes are distinct from the declaration-range
footprints in Table~\ref{tab:footprint}, which count only source lines
used by the selected principal results.

\subsection{Build and dependency audit}

The environment is pinned to Lean \code{leanprover/lean4:v4.34.0-rc2},
compiler commit
\leanname{6a10ac8c22beadecabdbb0919c2b50214762f91d}, and Mathlib revision
\leanname{85e3a25e006c35636f0e53b0e9296caca2685bc0}.
From the release root, with that toolchain installed, run
\begin{verbatim}
lake exe cache get
LEAN_NUM_THREADS=2 lake build
LEAN_NUM_THREADS=2 ruby verification/check.rb
\end{verbatim}
The first command fetches the pinned dependency cache. The remaining
commands build the configured \atlas\ targets and run the release
audit; they do not invoke Nanoda or Comparator. Preserve
\code{lean-toolchain}, \code{lakefile.toml}, and \code{lake-manifest.json}.
The release catalogue \mbox{\leanname{docs/catalogue.json}} maps claims
to declarations and evidence, and \code{AUDIT.md} documents the build
and dependency checks.

The final build/audit driver ran from 05:43:04 to
05:52:59 UTC on 18 September. This is the final driver's elapsed
interval, not the time for a source rebuild from scratch. A preceding
attempt had a process exit with status 137. The affected target was
then built successfully on its own, and the final full driver passed.

\subsection{Replay and statement-inspection results}\label{app:replays}

\paragraph{Nanoda.}
The checker revision is
\leanname{4c544ed4099c8227f07d5de77ad1e69fb0740a27}.
The run checked 70\,891 exported declarations, exited with status zero,
and finished at 06:40:26 UTC on 18 September. Checking took
31 minutes 0.82 seconds on one thread, with maximum resident memory
5\,375\,116 KiB (about 5.13 GiB); export took 93.66 seconds.
The exported data occupied 936\,181\,371 bytes.

\paragraph{Comparator.}
The run used Comparator revision
\leanname{2312244ac716564a61cc0bf4e107d9abf1757a61}, exporter revision
\leanname{cacf989bd75f608700820f6afc595f32e7a99a4d}, and the Landrun sandbox
at revision \leanname{811cfff51ceaf3d9843708aa6d22e9b84ccac8b4}.
It exited with status zero after 41 minutes 35.542 seconds of service
runtime. The formal-reference qualification is stated in
Appendix~\ref{sec:verification}.

\paragraph{Statement inspections.}
The actual carrier applications were inspected in addition to the
weaker automatic test for the presence of model declarations in type
dependencies.

\subsection{Per-group declaration footprints}\label{app:footprints}

Table~\ref{tab:footprint} measures the source needed by the selected
order and simplicity declarations for each completed entry. The global
deduplicated \atlas\ footprint is \textbf{88\,856 lines}. This measures
source dependencies, not proof-term size, model-token usage, or the
amount of new mathematics.

Each column counts physical source lines in compiler-recorded
declaration ranges reached through the selected theorem's transitive
type and proof dependencies, including interior comments and blank
lines. Generated helpers use the recorded enclosing declaration's
range; a final position at column zero excludes that line. Ranges are
merged by file, so column and row totals overlap. Lean core and external
libraries are excluded.

For $\PSL_n(q)$, $n$ is matrix dimension; $r$ is rank for $B,C,D$.
The type-$A$, symplectic, type-$B$, and $G_2$ simplicity roots are
exact-range equivalences including their exceptional cases. The Ree
root applies for $q=3^{2m+1}$, $m\geq1$. Isomorphic construction
families have separate rows.

The 46-root footprint audit has 65\,676 nodes, complete graph references,
and resolved source ranges for every project dependency. The broader
911-declaration release audit is a separate verification scope.

\begin{table}[!htbp]
\centering\small
\begin{tabular}{@{}lrrr@{}}
\toprule
Group or family & Order footprint & Simplicity footprint & Union \\
\midrule
$C_p$ & 5 & 12 & 12 \\
$\Alt_n$ & 4 & 4 & 7 \\
$\PSL_n(q)$ & 168 & 322 & 379 \\
$B_r(q)$ & 4\,121 & 7\,898 & 8\,009 \\
$C_r(q)$ & 708 & 2\,203 & 2\,203 \\
$D_r(q)$ & 4\,290 & 2\,466 & 5\,408 \\
$G_2(q)$ & 1\,237 & 3\,287 & 3\,568 \\
${}^2G_2(q)$ & 1\,570 & 2\,225 & 2\,250 \\
$\mathrm{M}_{11}$ & 5\,479 & 5\,866 & 5\,868 \\
$\mathrm{M}_{12}$ & 4\,924 & 5\,959 & 5\,961 \\
$\mathrm{M}_{22}$ & 4\,183 & 5\,348 & 5\,351 \\
$\mathrm{M}_{23}$ & 4\,119 & 4\,492 & 4\,495 \\
$\mathrm{M}_{24}$ & 3\,477 & 4\,582 & 4\,585 \\
$\mathrm{Co}_{1}$ & 10\,826 & 12\,334 & 12\,346 \\
$\mathrm{Co}_{2}$ & 10\,825 & 14\,070 & 14\,072 \\
$\mathrm{Co}_{3}$ & 12\,137 & 16\,423 & 16\,423 \\
$\mathrm{McL}$ & 13\,704 & 16\,375 & 16\,377 \\
$\mathrm{HS}$ & 13\,447 & 14\,771 & 14\,776 \\
$\mathrm{Suz}$ & 19\,727 & 21\,320 & 21\,374 \\
$\mathrm{J}_{2}$ & 6\,989 & 7\,954 & 8\,184 \\
$\mathrm{Fi}_{22}$ & 31\,955 & 34\,248 & 34\,250 \\
$\mathrm{Fi}_{23}$ & 31\,950 & 34\,247 & 34\,249 \\
$\mathrm{Fi}_{24}'$ & 31\,813 & 33\,396 & 33\,397 \\
\midrule
\textbf{Global deduplicated union} & & & \textbf{88\,856} \\
\bottomrule
\end{tabular}
\caption{Compiler-derived \atlas\ source-line dependency footprints for
46 principal order and simplicity declarations: eight families and
fifteen sporadic constructions. Shared source lines are counted once
within each union; different rows need not have disjoint dependencies.}
\label{tab:footprint}
\end{table}

\FloatBarrier


\begingroup\small
\begin{thebibliography}{99}
\interlinepenalty=10000
\setlength{\itemsep}{0pt}
\setlength{\parskip}{0pt}

\bibitem{BaarnhielmRee}
H. B\"a\"arnhielm, Recognising the small Ree groups in their natural
representations, \emph{Journal of Algebra} \textbf{416} (2014), 139--166.
\href{https://doi.org/10.1016/j.jalgebra.2014.06.017}{doi:10.1016/j.jalgebra.2014.06.017};
\href{https://arxiv.org/abs/1206.0411}{arXiv:1206.0411}.

\bibitem{BorcherdsVA}
R.~E. Borcherds,
Vertex algebras, Kac--Moody algebras, and the Monster,
\emph{Proceedings of the National Academy of Sciences USA}
\textbf{83} (1986), 3068--3071.
\href{https://doi.org/10.1073/pnas.83.10.3068}{doi:10.1073/pnas.83.10.3068}.

\bibitem{BorcherdsMoonshine}
R.~E. Borcherds,
Monstrous moonshine and monstrous Lie superalgebras,
\emph{Inventiones Mathematicae} \textbf{109} (1992), 405--444.
\href{https://doi.org/10.1007/BF01232032}{doi:10.1007/BF01232032}.

\bibitem{BorovikInvolutions}
A.~V. Borovik, \emph{Orthogonal and symplectic black box groups,
revisited}, 2001, Sections~2.5--2.6.
\href{https://arxiv.org/abs/math/0110234}{arXiv:math/0110234}.

\bibitem{ATLAS85}
J.~H. Conway, R.~T. Curtis, S.~P. Norton, R.~A. Parker, and R.~A. Wilson,
\emph{Atlas of Finite Groups: Maximal Subgroups and Ordinary Characters
for Simple Groups}, Clarendon Press, Oxford, 1985.

\bibitem{CS}
J.~H. Conway and N.~J.~A. Sloane,
\emph{Sphere Packings, Lattices and Groups}, 3rd ed., Grundlehren der
mathematischen Wissenschaften 290, Springer, New York, 1999.
\href{https://doi.org/10.1007/978-1-4757-6568-7}{doi:10.1007/978-1-4757-6568-7}.

\bibitem{FLM}
I. Frenkel, J. Lepowsky, and A. Meurman,
\emph{Vertex Operator Algebras and the Monster},
Pure and Applied Mathematics 134, Academic Press, 1988.

\bibitem{Geck}
M. Geck, \emph{A Course on Lie algebras and Chevalley groups},
2024; revised version~3, 2 October 2025.
\href{https://arxiv.org/abs/2404.11472v3}{arXiv:2404.11472v3}.

\bibitem{OddOrder}
G. Gonthier et al., A machine-checked proof of the Odd Order Theorem,
in \emph{Interactive Theorem Proving}, LNCS 7998, Springer, 2013,
pp.~163--179.
\href{https://doi.org/10.1007/978-3-642-39634-2_14}{doi:10.1007/978-3-642-39634-2\_14}.
Source: \url{https://github.com/math-comp/odd-order}.

\bibitem{GriessLam}
R.~L. Griess, Jr. and C.~H. Lam,
A new existence proof of the Monster by VOA theory,
\emph{Michigan Mathematical Journal} \textbf{61} (2012), 555--573.
\href{https://doi.org/10.1307/mmj/1347040259}{doi:10.1307/mmj/1347040259};
\href{https://arxiv.org/abs/1103.1414}{arXiv:1103.1414}.

\bibitem{GMS}
R.~L. Griess, Jr., U. Meierfrankenfeld, and Y. Segev,
A uniqueness proof for the Monster,
\emph{Annals of Mathematics} \textbf{130} (1989), 567--602.

\bibitem{AtlasSoftware}
G. H\"ohn, \emph{ATLAS: constructive finite simple groups in Lean},
source release, commit \code{935474b},
18 September 2026.
\mbox{\url{https://github.com/Moonshine-in-Kansas/atlas}}.
Source identification, development measurements, and checking scope
are given in Appendix~\ref{app:provenance}.

\bibitem{FischerAlgebra}
G. H\"ohn,
\emph{The Conway--Parker algebra and the largest Fischer group},
preprint, September 2026; mathematical source for
the completed Fischer algebra formalization.

\bibitem{HallJankoNote}
G. H\"ohn,
\emph{The Hall--Janko Simple Group and the Icosian Leech Lattice},
preprint, 2026.
\href{https://arxiv.org/abs/2609.22842}{arXiv:2609.22842}.

\bibitem{AlbertMassNote}
G. H\"ohn,
\emph{Definite integral Albert algebras: arithmetic and finite geometry},
preprint, 2026.
\href{https://arxiv.org/abs/2609.27550}{arXiv:2609.27550}.

\bibitem{BabyVOA}
G. H\"ohn, \emph{Selbstduale Vertexoperatorsuperalgebren und das
Babymonster}, Bonner Mathematische Schriften 286, 1996.
\href{https://arxiv.org/abs/0706.0236}{arXiv:0706.0236}.

\bibitem{CayleyPlaneNote}
G. H\"ohn,
\emph{The unique extremal threemodular lattice of rank 26, the
generalized hexagon $(2,8)$, and the tight Cayley-plane $5$-design},
preprint, 2026.
\href{https://arxiv.org/abs/2609.25086}{arXiv:2609.25086}.

\bibitem{FischerVOA}
G. H\"ohn and C.-H. Lam,
\emph{A vertex operator algebra construction of the Fischer groups},
working title; manuscript in preparation.

\bibitem{MonsterOrder}
G. H\"ohn and M. Seysen,
\emph{The Order of the Monster Finite Simple Group}, 2025.
\href{https://arxiv.org/abs/2508.01037}{arXiv:2508.01037}.

\bibitem{IshidaOddOrder}
Y. Ishida,
\emph{odd-order: The Feit--Thompson Odd Order Theorem in Lean 4},
software repository and README, 2026; consulted 24 September 2026.
\url{https://github.com/yawara/odd-order}.

\bibitem{ONan}
C. Jansen and R.~A. Wilson, Two new constructions of the O'Nan group,
\emph{Journal of the London Mathematical Society} \textbf{56} (1997),
579--583.
\href{https://doi.org/10.1112/S0024610798005742}{doi:10.1112/S0024610798005742}.

\bibitem{OrderRecognition}
W. Kimmerle, R. Lyons, R. Sandling, and D.~N. Teague,
Composition factors from the group ring and Artin's theorem on orders
of simple groups, \emph{Proceedings of the London Mathematical Society}
(3) \textbf{60} (1990), 89--122.
\href{https://doi.org/10.1112/plms/s3-60.1.89}{doi:10.1112/plms/s3-60.1.89}.

\bibitem{KitaKen}
KitaKen1,
\emph{FiniteSimpleGroups: Lean/mathlib formalizations of finite simple
groups}, software repository and README, consulted 18 September 2026.
\url{https://github.com/KitaKen1/finite-simple-groups-lean}.

\bibitem{Kuilen}
R. van de Kuilen, \emph{The first Janko group $J_1$: simplicity and
formalization}, bachelor's thesis, Utrecht University, 2024.
\url{https://studenttheses.uu.nl/items/7f8263cb-453b-4e78-99e0-35135037c33e}.

\bibitem{LeeComputations}
M. Lee, Computation and the sporadic simple groups,
in \emph{Computational Group Theory}, Oberwolfach Report 27/2025,
pp.~1411--1413.
\url{https://publications.mfo.de/bitstream/handle/mfo/4340/OWR_2025_27.pdf}.

\bibitem{Mathlib}
The mathlib Community,
The Lean mathematical library,
in \emph{Proceedings of the 9th ACM SIGPLAN International Conference
on Certified Programs and Proofs (CPP 2020)}, ACM, 2020,
pp.~367--381.
\href{https://doi.org/10.1145/3372885.3373824}{doi:10.1145/3372885.3373824};
\href{https://arxiv.org/abs/1910.09336}{arXiv:1910.09336}.

\bibitem{MathlibGroups}
The mathlib Community, \emph{Mathlib group-theory documentation},
including alternating groups and projective special linear groups.
\url{https://leanprover-community.github.io/mathlib4_docs/Mathlib/GroupTheory/SpecificGroups/Alternating.html}.

\bibitem{LeanValidation}
The Lean development team,
\emph{Validating a Lean Proof}, Lean Language Reference,
online documentation, consulted 18 September 2026.
\url{https://lean-lang.org/doc/reference/latest/ValidatingProofs/}.

\bibitem{Soundness}
L. de Moura,
\href{https://leodemoura.github.io/blog/2026-8-24-postmortem-for-the-kernel-soundness-bug-hunt/}{\emph{Postmortem for the Kernel Soundness Bug Hunt}},
24 August 2026, online article.

\bibitem{Lean4}
L. de Moura and S. Ullrich,
The Lean 4 theorem prover and programming language,
in \emph{Automated Deduction---CADE 28}, Lecture Notes in Computer
Science 12699, Springer, 2021, pp.~625--635.
\href{https://doi.org/10.1007/978-3-030-79876-5_37}{doi:10.1007/978-3-030-79876-5\_37}.

\bibitem{FormaTheoria}
T. Nie, A. Zhang, Y. Tang, D. Testa, S.-T. Yau, P. Li, and Y. Zhou,
\emph{FormaTheoria: Constructing Large-Scale Lean Theories from
Mathematical Literature---Toward the Formalization of the Classification
of Finite Simple Groups}, 2026.
\href{https://arxiv.org/abs/2608.10894}{arXiv:2608.10894}.
Source and project-status description:
\url{https://github.com/Qiuzhen-CFSG/CFSG}, consulted 18 September 2026.

\bibitem{Plas}
E. van der Plas, \emph{Formalisation of the Finite Simple Conway Groups
in Lean}, bachelor's thesis, Utrecht University, 2024.
\url{https://studenttheses.uu.nl/handle/20.500.12932/46756}.

\bibitem{SeysenImplementation}
M. Seysen,
A fast implementation of the Monster group: The Monster has been tamed,
\emph{Journal of Computational Algebra} \textbf{9} (2024), Article 100012.
\href{https://doi.org/10.1016/j.jaca.2024.100012}{doi:10.1016/j.jaca.2024.100012}.

\bibitem{Shimakura}
H. Shimakura,
The automorphism group of the vertex operator algebra $V_L^+$ for an
even lattice $L$ without roots,
\emph{Journal of Algebra} \textbf{280} (2004), 29--57.
\href{https://doi.org/10.1016/j.jalgebra.2004.05.018}{doi:10.1016/j.jalgebra.2004.05.018};
\href{https://arxiv.org/abs/math/0311141}{arXiv:math/0311141}.

\bibitem{Steinberg}
R. Steinberg, \emph{Lectures on Chevalley Groups},
University Lecture Series 66, American Mathematical Society, 2016;
original Yale lecture notes, 1967--1968.

\bibitem{Taylor}
D.~E. Taylor, \emph{The Geometry of the Classical Groups},
Sigma Series in Pure Mathematics 9, Heldermann Verlag, Berlin, 1992.
Author's corrected text:
\url{https://www.maths.usyd.edu.au/u/don/papers/gcg.pdf}.

\bibitem{WilsonReeAlgebra}
R.~A. Wilson, Another new approach to the small Ree groups,
\emph{Archiv der Mathematik} \textbf{94} (2010), 501--510.
\href{https://doi.org/10.1007/s00013-010-0130-4}{doi:10.1007/s00013-010-0130-4}.

\bibitem{WilsonRee}
R.~A. Wilson, A new construction of the Ree groups of type ${}^2G_2$,
\emph{Proceedings of the Edinburgh Mathematical Society}
\textbf{53} (2010), 531--542.
\href{https://doi.org/10.1017/S001309150800028X}{doi:10.1017/S001309150800028X}.
Author's text:
\url{https://webspace.maths.qmul.ac.uk/r.a.wilson/pubs_files/ReeG2.pdf}.

\bibitem{Wilson}
R.~A. Wilson,
\emph{The Finite Simple Groups}, Graduate Texts in Mathematics 251,
Springer, London, 2009.
\href{https://doi.org/10.1007/978-1-84800-988-2}{doi:10.1007/978-1-84800-988-2}.

\bibitem{ATLASOnline}
R.~A. Wilson and collaborators,
\emph{ATLAS of Finite Group Representations}, online database;
classical and exceptional small-group identifications consulted
15 September 2026.
\url{https://brauer.maths.qmul.ac.uk/Atlas/}.

\end{thebibliography}
\endgroup
\end{document}